\documentclass[11pt,a4paper]{article}

\usepackage[truedimen,margin=30mm]{geometry} 

\usepackage{mathrsfs}
\usepackage{amssymb}
\usepackage{amsmath}
\usepackage{ascmac}
\usepackage{amsthm}

\usepackage[pdftex]{graphicx}       
\usepackage[pdftex]{color}     
\usepackage{natbib}
\usepackage{setspace}
\usepackage{color}
\usepackage{url}
\usepackage{booktabs}

\usepackage{enumerate}

\usepackage{bm}
\usepackage{comment}
%

\usepackage{pifont}%
\usepackage{titlesec}
\titleformat*{\section}{\large\bfseries}
\titleformat*{\subsection}{\it}

\newtheorem{thm}{Theorem}
\newtheorem{lem}{Lemma}

%

%


\def\ta{{\tau}}

\def\ep{{\varepsilon}}

\def\si{{\sigma}}

\def\pd{\partial}

\def\bta{{\text{\boldmath $\eta$}}}
\def\bla{{\text{\boldmath $\lambda$}}}

\def\bxi{{\text{\boldmath $\xi$}}}

\def\btau{{\text{\boldmath $\tau$}}}
\def\bmu{{\text{\boldmath $\mu$}}}

\def\bta{{\text{\boldmath $\eta$}}}

\def\bla{{\text{\boldmath $\lambda$}}}

\def\bxi{{\text{\boldmath $\xi$}}}

\def\btau{{\text{\boldmath $\tau$}}}
\def\bmu{{\text{\boldmath $\mu$}}}

\def\Si{{\Sigma}}

\def\bSi{{\text{\boldmath $\Si$}}}

\def\bSit{{\widetilde \bSi}}
\def\a{{\text{\boldmath $a$}}}

\def\e{{\text{\boldmath $e$}}}

\def\u{{\text{\boldmath $u$}}}
\def\v{{\text{\boldmath $v$}}}
\def\w{{\text{\boldmath $w$}}}

\def\z{{\text{\boldmath $z$}}}
\def\A{{\text{\boldmath $A$}}}
\def\B{{\text{\boldmath $B$}}}

\def\D{{\text{\boldmath $D$}}}
\def\E{{\text{\boldmath $E$}}}

\def\H{{\text{\boldmath $H$}}}
\def\I{{\text{\boldmath $I$}}}

\def\V{{\text{\boldmath $V$}}}

\def\X{{\text{\boldmath $X$}}}

\def\wbt{{\tilde \w}}

\def\diag{{\rm diag\,}}

\def\non{{\nonumber}}
%

\title{{\bf Locally Adaptive Bayesian Isotonic Regression using Half Shrinkage Priors}}

\date{}
\author{}

\begin{document}

\maketitle
\doublespacing

\vspace{-1.5cm}
\begin{center}
Ryo Okano$^1$, Yasuyuki Hamura$^2$, Kaoru Irie$^3$ and Shonosuke Sugasawa$^4$
\end{center}

\noindent
$^1$Graduate School of Economics, The University of Tokyo\\
$^2$Graduate School of Economics, Kyoto University\\
$^3$Faculty of Economics, The University of Tokyo\\
$^4$Faculty of Economics, Keio University

\maketitle

\vspace{1cm}
\begin{center}
{\bf \large Abstract}
\end{center}
Isotonic regression or monotone function estimation is a problem of estimating function values under monotonicity constraints, which appears naturally in many scientific fields.
This paper proposes a new Bayesian method with global-local shrinkage priors for estimating monotone function values. 
Specifically, we introduce half shrinkage priors for positive valued random variables and assign them for the first-order differences of function values.
We also develop fast and simple Gibbs sampling algorithms for full posterior analysis.
By incorporating advanced shrinkage priors, the proposed method is adaptive to local abrupt changes or jumps in target functions. 
We show this adaptive property theoretically by proving that the posterior mean estimators are robust to large differences and that asymptotic risk for unchanged points can be improved.  
Finally, we demonstrate the proposed methods through simulations and applications to a real data set. 

\bigskip\noindent
{\bf Key words}: Horseshoe prior; Isotonic regression; Monotone curve estimation; Tail robustness.

\newpage
\section{Introduction}

The problem of estimating function under monotonicity constraints is generally called isotonic regression or monotone function estimation \citep{barlow1972statistical, robertson1988order, groeneboom2014nonparametric}.
This problem naturally arises in a variety of scientific applications including the dose-response modeling in epidemiological studies \citep{morton2000additive} and estimation of the demand curves in economics \citep{quah2000monotonicity}.

There is a large literature on point estimation for isotonic regression, including the minimization of sums-of-squares errors subject to ordering constraints \citep{brunk1972statistical, robertson1988order}, the use of splines \citep{ramsay1998estimating, wang2008isotonic} and kernels \citep{mammen1991estimating, hall2001nonparametric} with appropriate constraints. 
Relaxing the hard constraint on monotonicity is known as the nearly-isotonic regression and has also been studied \citep{tibshirani2011nearly,matsuda2021generalized}. 
To extend those approaches to allow for uncertainty quantification by posterior analysis, we need to model a prior on the target function. This can be done in multiple ways, including the linear combination of monotone functions \citep{neelon2004bayesian,cai2007bayesian,mckay2011variable,shively2009bayesian}, Gaussian processes combined with monotone projection \citep{lin2014bayesian} and Gaussian processes for derivatives of functions \citep{lenk2017bayesian,kobayashi2021flexible}. Although these methods can realize the posterior distribution of the exactly or approximately monotone function, they have two limitations. First, the models are not flexible enough to express a function with ``jumps,'' or sudden increases/decreases at several points. Second,  posterior computation can be costly and inefficient due to the restriction of monotonicity.

In this paper, we address both problems by considering a prior for the first-order differences of the function of interest at consecutive points. In doing so, we utilize the global-local shrinkage by the half-horseshoe (HH) prior, a heavily-tailed distribution defined on the positive real line, to ensure the exact monotonicity. This distribution is the truncated version of the horseshoe prior \citep{carvalho2010horseshoe} and exhibits the same heavy tail, which enables the adaptive response of the posterior distribution to jumps. In addition, the HH prior can be written as the scale mixture of half-normals (the normal distribution truncated on the positive real line), allowing an efficient posterior computation by a simple Gibbs sampler. The horseshoe and its variants have already been applied to the first-order or higher-order differences in the context of time series analysis \citep{faulkner2018locally,bitto2019achieving,kowal2019dynamic} and shown the successful posterior inference for complex patterns of functional forms. We add the monotone constraint to this series of research.

We clarified the utility of the HH prior for the isotonic regression through the two theoretical results about the Bayes estimator of the target function. First, we verified the posterior tail-robustness under the proposed model. In other words, the more extreme jump we observe, the less shrinkage we apply a posteriori, making the Bayes estimator adaptive to the observed jump. Although this property is well-known for the horseshoe-type prior \citep{carvalho2010horseshoe,hamura2022global}, the evaluation of posterior robustness for isotonic regression is complicated due to the correlation between the function values at two consecutive points. Consequently, the robustness for the isotonic regression by the HH prior is ``weaker'' than those without monotonicity constraint, in terms of the speed of diminishing shrinkage effects. 
Second, we proved the Kullbuck-Leibler super-efficiency of the proposed method under sparsity, showing efficiency in estimating functions that are constant on some regions.

The rest of the paper is organized as follows. In Section~\ref{methods}, we provide the model formulation and introduce shrinkage priors for positive valued differences of coefficients. The Gibbs sampler for posterior computation is provided in Section~\ref{methods}.
In Section~\ref{theory}, we show theoretical results on the posterior robustness to jumps and the efficiency in handling sparsity. 
Section~\ref{simulation} is devoted to numerical experiments for the investigation of finite sample performance of the proposed methods together with some existing methods.
The application to the detection of the drop of Nile-river water flow is discussed in Section~\ref{data_analysis}. The proofs, algorithms, and additional experimental results are provided in the Appendix.

\section{Methods}
\label{methods}
\subsection{Model formulation}
We assume the $n$ observations $y_1, ..., y_n$ are conditionally independent and modeled by
$$
y_i=f(x_i)+\epsilon_i, \ \ \epsilon_i\sim N(0, \sigma^2), \quad \text{for $i=1, ..., n$}, 
$$
where $f: \mathbb{R} \to \mathbb{R}$ is monotone. 
We discuss the estimation of $f$ from the viewpoint of the Gaussian location model by writing $\theta_i=f(x_i)$. 
Then, the monotonicity constraint translates to $\theta_1 \le \theta_2 \le \cdots \le \theta_n$, and the model can be rewritten as 
\begin{equation}
y|\theta , \sigma ^2 \sim N(\theta , \sigma ^2 I_n) ,  
\label{model_1}
\end{equation}
where $y = (y_1, ..., y_n)^\top$ and $\theta = (\theta_1, ..., \theta_n)^\top$.

We guarantee the monotonicity of function values by using priors for the first-order differences of function values supported in $(0, \infty)$. 
Denote the differences by
$\eta_j = \theta_j - \theta_{j-1}$ for $j=2, ..., n$ and the initial value by $\eta_1 = \theta_1$. 
Then we have $\theta = D \eta$ where $\eta = (\eta_1, ..., \eta_n)^\top$ and $D$ is a $p\times p$ lower triangular matrix defined below: 
\begin{equation}
D = \begin{bmatrix}
1 & 0 & \cdots & 0 & 0 \\
1 & 1 & \cdots & 0 & 0 \\
& & \ddots & & \vdots \\
1 & 1 & \cdots & 1 & 0 \\
1 & 1 & \cdots & 1 & 1 
\end{bmatrix}.
\label{D_mat}
\end{equation}
In the next subsection, we specify the prior for $\eta$ to define the prior for $\theta$ implicitly via equality $\theta = D \eta$. 
The likelihood of $\eta$ is obtained as 
\begin{equation*}
y | \eta, \sigma^2\sim N(D\eta ,\sigma ^2 I_n).
\end{equation*}

\subsection{Priors for differences of coefficients} \label{priors}
To model positive parameters $\eta_2, ..., \eta_n$, we utilize global-local shrinkage priors whose supports are the half real line. 
We assume that $\eta _2,\dots ,\eta _n$ are mutually independent and follow the truncated normal distributions on $(0,\infty)$ as 
\begin{equation}
\eta_j|\tau_j^2, \lambda^2, \sigma^2 \sim N_{+}(0, \sigma^2\lambda^2\tau_j^2) \quad \text{and} \quad \tau_j \sim \pi(\tau_j), \quad \text{for $j=2, ..., n$},
\label{prior_form}
\end{equation}
where both $\lambda$ and $(\tau_2, ..., \tau_n)$ are all positive. The prior for $\eta$, or equivalently for $\theta$, is scaled by observational variance $\sigma^2$, hence we define our uncertainty on function values relative to the sampling variations. This density function is also convenient for posterior computation for its conditional conjugacy for $\sigma^2$.
 Note also that this prior does not depend on the arguments of a function $f$, or $(x_1,\dots ,x_n)$, for simplicity. Hence we implicitly assume that the function values are observed on evenly-spaced grids. For the extension to the case of irregular grids, see Section~\ref{grids}.

From the viewpoint of shrinkage estimation, global parameter $\lambda$ shrinks all $\eta_j$'s toward zero uniformly, while local parameter $\tau_j$ provides a custom shrinkage effect for each individual $j$. 
The flexibility of shrinkage effects depends on the choice of priors for the global and local shrinkage parameters, $\lambda$ and $(\tau_2, ..., \tau_n)$. 
The use of these parameters in (\ref{prior_form}) is known as the global-local shrinkage and imported from the literature on shrinkage priors \citep{carvalho2010horseshoe}. In the context of trend filtering, the same global-local shrinkage technique has been practiced (e.g., \citealt{faulkner2018locally}) for the estimation of $\eta_j$, where the half-normal distribution in (\ref{prior_form}) is replaced by the double-sided normal distribution. 

Here we discuss the choice of priors for $\tau_j$'s; the prior for $\lambda$ is provided later in Section~\ref{global}. Following the discussions of \cite{faulkner2018locally}, we consider three types of priors: half-horseshoe, half-Laplace and half-normal.

\begin{itemize}
	\item \textit{Half-horseshoe prior}.
	The horseshoe prior \citep{carvalho2009handling,carvalho2010horseshoe} has the property of a very strong shrinkage effect, as represented by the density spikes at near zero. In addition, its density has a Cauchy-like heavy tail, indicating that the shrinkage is not applied to outlying signals. This combination results in an excellent performance as a shrinkage prior. We assume a truncated version of horseshoe prior for $\eta_j$. 
	This is realized by assuming that $\tau_j$ follows the half-Cauchy distribution in (\ref{prior_form}) or, equivalently, $\tau_j^2|\nu_j \sim Ga(1/2, \nu_j)$ and $\nu_j \sim Ga(1/2, 1)$, where $Ga(1/2, \nu_j)$ is the Gamma distribution with shape $1/2$ and rate $\nu_j$ (with mean $1/(2\nu_j)$). 
	We name this model for $\eta_j$ the half-horseshoe prior.
	The density spike at $\eta _j=0$ can also be seen in this half-horseshoe prior, reflecting our belief that $\theta _j \approx \theta _{j+1}$ for most $j$'s. Likewise, the heavy tail is expected to explain a possible large deviation of $\theta_{j+1}$ from $\theta_j$.

	\item \textit{Half-Laplace prior}. 
	The half-Laplace prior, which is merely the exponential prior, is obtained by assuming that $\tau_j^2 \sim Ga(1, \nu)$. As the truncated version of the Bayesian Lasso \citep{park2008bayesian}, this model will be compared with the half-horseshoe prior.

	\item \textit{Half-normal prior}.
	We also consider the half-normal prior, by setting $\tau _j=1$ in (\ref{prior_form}). 
The normal prior does have a shrinkage effect as the Bayesian counterpart of the Ridge estimation, but its shrinkage effect and tail-robustness are much weaker than those of the horseshoe and Laplace priors. 
\end{itemize}

In the left panel of Figure~\ref{density}, we visually illustrate the marginal distributions of $\eta_j$ under the half-horseshoe, half-Laplace and half-normal priors. It is shown in this figure that the density under the half-horseshoe prior has a spike at the origin and heavier tail than those under the half-Laplace and half-normal priors. In addition, 
in the right panel of Figure~\ref{density}, we generate four sets of coefficients $(\theta_1, ..., \theta_{10})$ from the half-horseshoe prior with $\lambda^2 = 1$ and $\sigma^2 = 1$, and illustrate the shapes of the functions. As seen in those realizations of $f$, the HH prior represents a strong prior belief that the graph of $f$ is mostly flat and has occasional jumps. If the functional form of the true $f$ is unmatched to this description, then the posterior inference with the HH prior is likely to be unsuccessful. In the simulation study, we consider the linear, constantly increasing $f$ (Scenario (III) in Section~\ref{data_analysis}), where we confirm that the HL and HN priors perform better. Thus, the HH prior is not a multi-purpose model, but should be chosen with great care in applications.

In what follows, we focus on the study of this finite-dimensional, correlated shrinkage prior for $\theta$ for practical reasons. 
Viewing this prior as the marginal distribution of a stochastic process for function $f$ is of potential interest, but we do not pursue such development here. 
In the case of irregular grids, finding a corresponding stochastic process for $f$ is difficult (e.g., see Section 2.4 of \citealt{faulkner2018locally}).

\begin{figure}[htbp]
	\begin{minipage}[b]{0.45\linewidth}
		\centering
		\includegraphics[keepaspectratio, width=75mm]{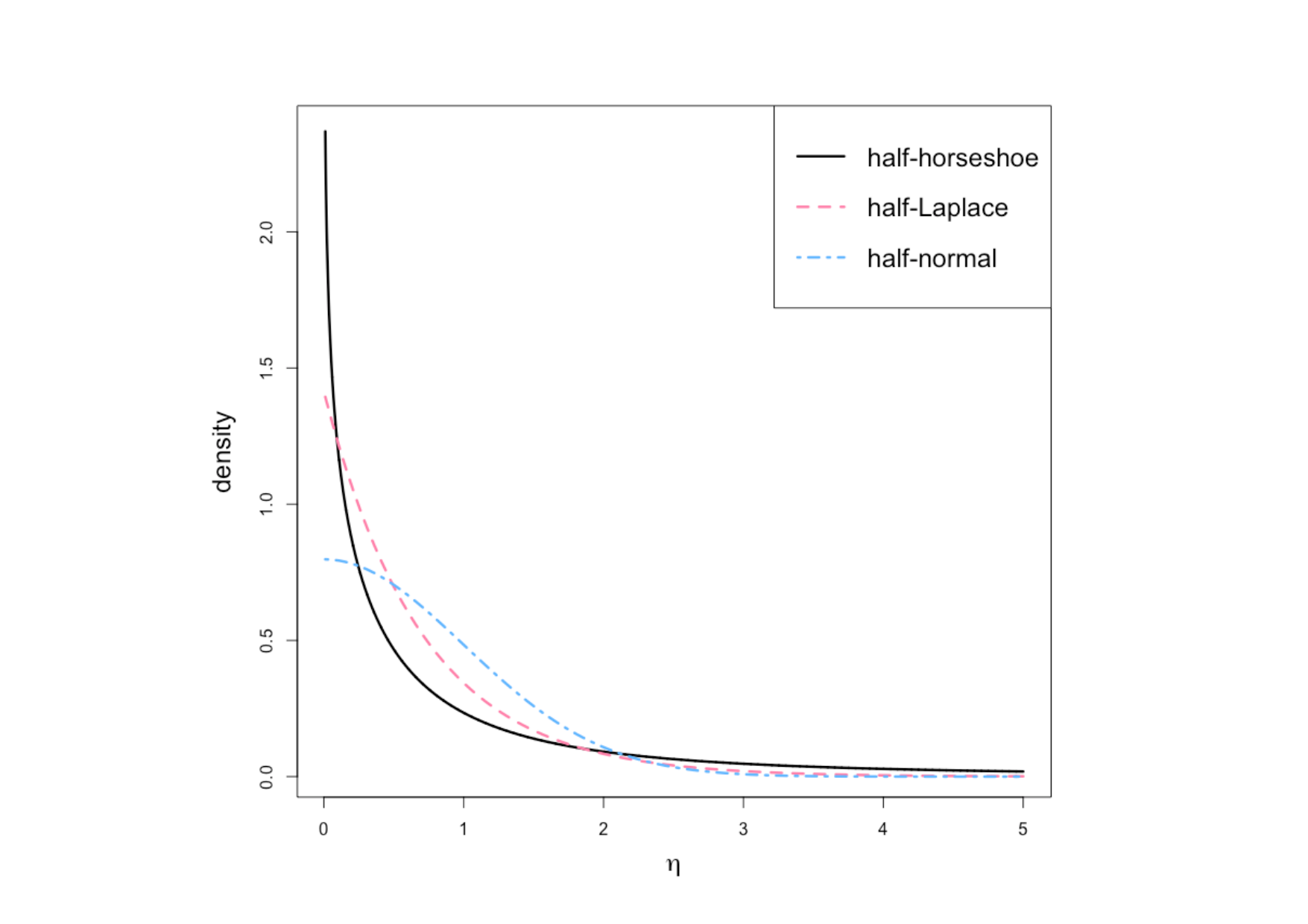}
	\end{minipage}
	\begin{minipage}[b]{0.45\linewidth}
		\centering
		\includegraphics[keepaspectratio, width=75mm]{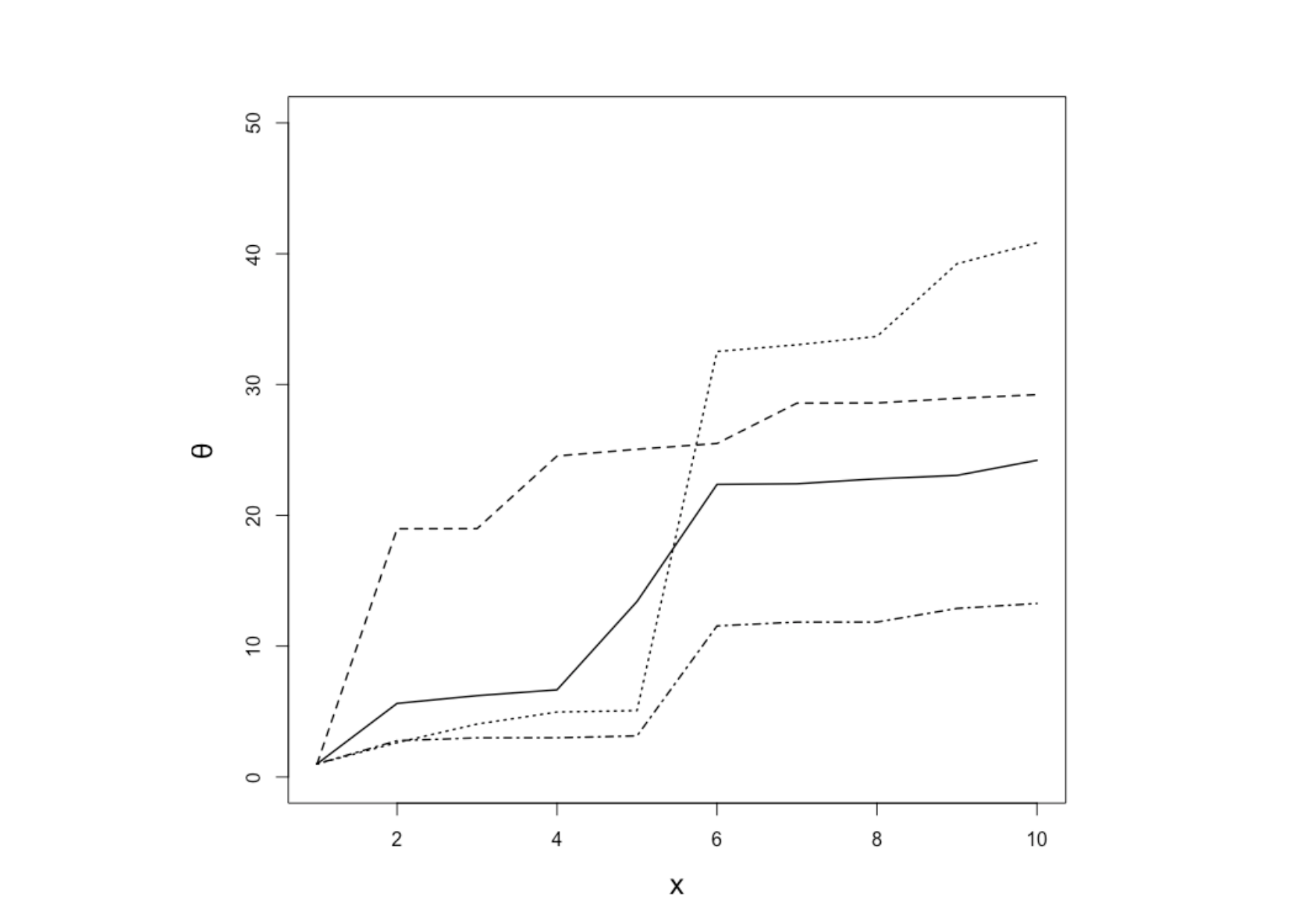}
	\end{minipage}
	\caption{Shapes of the half-horseshoe, half-Laplace and half-normal densities (left). Examples of shapes of monotone functions generated from the half-horseshoe prior (right).}
	\label{density}
\end{figure}

\subsection{Global parameters and posterior computation} \label{global}
For the observational variance $\sigma^2$, we assign the non-informative prior, $\pi(\sigma ^2)=\sigma^{-2}$. For the global shrinkage parameter, we set $\lambda^2|\xi \sim Ga(1/2, \xi)$ and $\xi \sim Ga(1/2, 1)$. 
We also set the scale-beta prior for $\eta_1$, namely, 
 $\eta _1 | \tau _1^2, \sigma ^2 \sim N( 0, \sigma ^2 \tau _1^2)$,  $\tau _1^2 | \nu _1 \sim Ga(1,\nu_1)$, and $\nu _1 \sim Ga(1/2,1).$ We use this set of priors for all the models, including the half-horseshoe, half-Laplace and half-normal models for $\eta _{2:n}$.

Although the Bayes estimator of $\theta_i$ is not analytically available, there is an efficient yet simple Markov chain Monte Carlo algorithm for posterior computation. 
The full conditional posteriors of all parameters become well-known distributions, so that we can efficiently carry out Gibbs sampling by generating posterior sampling from those distributions. We provide detailed step-by-step Gibbs samplings in the Appendix.

\subsection{Extension to irregular grids} \label{grids}
Our model formulation is so far restricted to the case where data is observed at equally spaced locations. Here, following \cite{faulkner2018locally}, we generalize the model to allow for data observed at irregularly spaced locations. 
Let $x_1 < x_2 < \cdots < x_n$ be the locations of observations, and we denote by $w_j = x_j - x_{j-1}$ the distance between adjacent locations. 
For the case of irregular grids, we assume the following prior for $\eta_j$:
\begin{equation}
\eta_j|\tau_j^2, \lambda^2, \sigma^2 \sim N_{+}(0, \sigma^2\lambda^2\tau_j^2 w_j) \quad \text{and} \quad \tau_j \sim \pi(\tau_j), \quad \text{for $j=2, ..., n$}.
\label{prior_irr}
\end{equation}
The difference from the regular grid case (\ref{prior_form}) can be seen in the dependence of the variance of $\eta_j$ on $w_j$. 
For this dependence, for example, when $w_j$ is small, it is expected to have a stronger shrinkage effect on $\eta_j = \theta_j - \theta_{j-1}$. 
The half-horseshoe, half-Laplace and half-normal priors for $\eta_j$ can be applied
in the same way with the case of the regular grid. The algorithm of the Gibbs sampler in the irregular grid case is trivially obtained and provided in the Appendix.

\section{Theoretical properties}
\label{theory}

\subsection{Posterior robustness for $\eta_j$}
The original horseshoe estimator for sparse signals is robust in the sense that the difference between the estimate and the observation vanishes as the observation becomes extreme \citep{carvalho2010horseshoe}. 
This property is known as tail-robustness.
In this subsection, we show that the posterior mean estimator in our setting is also robust to large positive signals, but in a different sense. 
This theoretical result explains the reason that the proposed method can make the posterior of the function adaptive to jumps.

Consider the model (\ref{model_1}), where $\sigma^2$ is assumed to be fixed. 
We treat the initial value $z_1 = y_1$ and the differences $z_j = y_j - y_{j-1}$ as observations and we rewrite the model (\ref{model_1}) as $z \sim N(\eta, \Sigma)$, where 
$z = D^{-1}y = (y_1, y_2 - y_1, ..., y_n - y_{n-1})^\top, \eta = D^{-1}\theta$ and $\Sigma = \sigma^2 D^{-1}(D^{-1})^\top$, and $D$ is the lower triangular matrix defined in (\ref{D_mat}). 
We consider the three distributions provided in Section~\ref{priors} as the priors for the location differences. 
That is, conditional on $\tau_j$'s, the prior distribution for $\eta_j$ for $j=2, ..., n$ is the scale mixture of half-normals (and the prior for $\eta_1$ is the scale mixture of normals):
\begin{gather*}
\eta_1 | \tau_1 \sim N(0, \tau_1^2), \\ 
\eta_j | \tau_j \sim N_+(0, \tau_j^2), \quad j=2, ..., n. 
\end{gather*}
where $\tau_j$ is fixed (half-normal), $\tau_j^2\sim Ga(1,\nu)$ for some $\nu>0$ (half-Laplace) or $\tau_j$ follows the half-Cauchy distribution (half-horseshoe). 
For simplicity, we omit the global shrinkage parameter.
We arbitrarily choose an index $i_\ast \in \{2, ..., n \}$, where a jump of the function value is observed, and fix the values $z_j \in \mathbb{R}$ for $j \neq i_\ast$. 
That is, $z_{i_\ast}$ is the only outlier in our setting. 
The next theorem shows the behavior of the posterior mean $E[\eta_{i_\ast} | z]$ in the presence of a large signal.

\begin{thm}
	The posterior means, $E[\eta_{i_\ast} | z]$, of the half-Laplace and half-horseshoe models described above are weakly tail-robust, in the sense that 
	\begin{equation*}
	\frac{ |E[\eta_{i_\ast} | z] - z_{i_\ast}| }{ z_{i_\ast} } \to 0 \quad \mathrm{as} \ z_{i_\ast} \to \infty.
	\end{equation*}
	\label{tail_robust}
\end{thm}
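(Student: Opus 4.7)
The plan is to reparametrize by the deviation $v = z_{i_\ast} - \eta_{i_\ast}$, exploit that the likelihood becomes $z_{i_\ast}$-free under this change of variable, and then combine this with the heavy tail of the half-horseshoe prior on $\eta_{i_\ast}$. A direct computation shows that, when expressed as a function of $(v, \eta_{-i_\ast})$ with the remaining $z_{-i_\ast}$ held fixed, the quadratic form $\|y - D\eta\|^2/\sigma^2$ is independent of $z_{i_\ast}$: for $k \ge i_\ast$, $y_k - \sum_{j\le k}\eta_j = \sum_{j\le k,\, j\ne i_\ast}(z_j - \eta_j) + v$, while the $k < i_\ast$ terms involve neither $\eta_{i_\ast}$ nor $z_{i_\ast}$. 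Marginalizing the local scales $\tau_j^2$ produces the marginal (half-)horseshoe densities $p_j(\eta_j)$, and marginalizing $\eta_{-i_\ast}$ against these priors yields
\[
M(v) = \int L(v, \eta_{-i_\ast}) \prod_{j\ne i_\ast} p_j(\eta_j)\, d\eta_{-i_\ast},
\]
where $L$ is the $z_{i_\ast}$-free Gaussian kernel. The posterior mean then takes the form
\[
E[v\mid z] \;=\; \frac{\int v\, p_{i_\ast}(z_{i_\ast} - v)\, M(v)\, \mathbf{1}(v < z_{i_\ast})\,dv}{\int p_{i_\ast}(z_{i_\ast} - v)\, M(v)\, \mathbf{1}(v < z_{i_\ast})\,dv},
\]
and the theorem will follow once this quantity is shown to converge to a finite limit, since then $(E[\eta_{i_\ast}\mid z] - z_{i_\ast})/z_{i_\ast} = -E[v\mid z]/z_{i_\ast} \to 0$.

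Two ingredients drive the asymptotic. First, the scale-mixture representation of the half-horseshoe gives $p_{i_\ast}(\eta) \sim C/\eta^2$ as $\eta \to \infty$; hence $z_{i_\ast}^2\, p_{i_\ast}(z_{i_\ast} - v) \to C$ uniformly for $v$ in any compact set. Second, $M(v)$ has Gaussian decay in $|v|$. To obtain the latter, decompose $\|\tilde y(v) - \tilde D\tilde\eta\|^2 = \|Q\tilde y(v)\|^2 + \|P\tilde y(v) - \tilde D\tilde\eta\|^2$ using the orthogonal projections onto $\mathrm{range}(\tilde D)^\perp$ and $\mathrm{range}(\tilde D)$, with $\tilde D$ obtained by removing the $i_\ast$-th column of $D$. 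The first term equals $(\alpha v + \gamma)^2$ for some $\alpha \ne 0$, because the removed column is linearly independent of the others; the second, once weighted by $\prod_{j\ne i_\ast} p_j(\eta_j)$ and integrated in $\tilde\eta$, is uniformly bounded in $v$ (by Cauchy--Schwarz, using $p_j \in L^2$). Thus $M(v) \le C' \exp(-(\alpha v + \gamma)^2/(2\sigma^2))$.

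These two estimates combine via dominated convergence with a splitting argument. Choose $v_0 = C_1(\log z_{i_\ast})^{1/2}$ with $C_1$ sufficiently large. On $\{|v| \le v_0\}$, the uniform convergence of $z_{i_\ast}^2\, p_{i_\ast}(z_{i_\ast}-v)$ together with the integrable majorant $M$ gives, after multiplying by $z_{i_\ast}^2$, limits equal to $C\int M(v)\,dv$ for the denominator and $C\int v\, M(v)\,dv$ for the numerator. On $v \in (v_0, z_{i_\ast}-1)$, the Gaussian tail of $M$ overpowers the crude bound $z_{i_\ast}^2\, p_{i_\ast}(z_{i_\ast}-v) \le C z_{i_\ast}^2$, contributing $o(1)$ provided $C_1$ is large. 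On $v \in (z_{i_\ast}-1, z_{i_\ast})$, where $p_{i_\ast}$ has an integrable logarithmic spike, $M(v) = O(\exp(-c z_{i_\ast}^2))$ comfortably absorbs the singularity. Symmetric estimates handle $v < -v_0$. One concludes $E[v\mid z] \to v^\ast := \int v\, M(v)\,dv / \int M(v)\,dv$, a finite constant, and hence $|E[\eta_{i_\ast}\mid z] - z_{i_\ast}|/z_{i_\ast} \to 0$.

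The hardest step is establishing the Gaussian decay of $M(v)$ jointly over the vector nuisance $\eta_{-i_\ast}$: this requires isolating the irreducible projection residual (possible only because the columns of $D$ are linearly independent, forcing $\alpha \ne 0$) and then controlling the remaining Gaussian factor in $L^2$ against the heavy-tailed priors. Once this Gaussian tail on $M$ is in hand, the multi-scale splitting above is routine, and it is precisely the Gaussian smallness of $M$ near $v = z_{i_\ast}$ that neutralizes the log-singularity of $p_{i_\ast}$ at the origin, preventing the shrinkage from pulling $E[\eta_{i_\ast}\mid z]$ away from $z_{i_\ast}$ at a rate faster than $o(z_{i_\ast})$.
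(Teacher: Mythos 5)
Your proof is correct, and it takes a genuinely different route from the paper's. The paper begins from the Tweedie-type representation $E[\bm{\eta}\mid\bm{z}]-\bm{z}=\bSi\,\nabla_{\bm{z}}\log m(\bm{z})$, keeps the local scales $\tau_j$ unintegrated, performs the change of variables $\tau_j=|z_j|v_j$, and then carries out a fairly heavy dominated-convergence argument over the $n$-dimensional scale integral to conclude that $E[\bm{\eta}\mid\bm{z}]-\bm{z}$ converges to a finite vector. You instead reparametrize the location $\eta_{i_\ast}$ directly via $v=z_{i_\ast}-\eta_{i_\ast}$, exploit that the likelihood is $z_{i_\ast}$-free in $(v,\eta_{-i_\ast})$ (because the only dependence of $\|y-D\eta\|^2$ on $(z_{i_\ast},\eta_{i_\ast})$ enters through $v$), marginalize the local scales to work with the \emph{marginal} half-horseshoe density $p_{i_\ast}$, and reduce the problem to showing $E[v\mid z]\to v^\ast$ for a finite constant $v^\ast=\int v\,M(v)\,dv\big/\int M(v)\,dv$. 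This is cleaner and more transparent: it isolates exactly the two analytic ingredients responsible for robustness, namely the polynomial tail $p_{i_\ast}(\eta)\sim C/\eta^2$ (so $z_{i_\ast}^2 p_{i_\ast}(z_{i_\ast}-v)\to C$) and the Gaussian decay of $M(v)$ coming from $\|Q\tilde y(v)\|^2=\alpha^2 v^2$ with $\alpha=\|Qd_{i_\ast}\|>0$, which in turn hinges on the linear independence of the columns of $D$. It also makes explicit why the positivity truncation, which introduces the integrable log-spike of the half-horseshoe at the origin, is harmless: the Gaussian smallness of $M$ near $v=z_{i_\ast}$ kills that contribution. A few small imprecisions: the affine term $\gamma$ in $\|Q\tilde y(v)\|^2$ is actually zero, since $Qd_j=0$ for all $j\neq i_\ast$; the uniform bound on the inner integral over $\tilde\eta$ follows simply from $\exp(\cdot)\le 1$ and $\int\prod p_j=1$, so Cauchy--Schwarz and $p_j\in L^2$ are not needed; and the ``uniform over compacts'' claim should be stated as uniform over $\{|v|\le v_0(z)\}$ with $z-v_0(z)\to\infty$, which does hold. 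None of these affects the correctness of the argument, and your version in fact gives the sharper conclusion that $E[\eta_{i_\ast}\mid z]-z_{i_\ast}$ converges (not merely stays bounded).
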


The proof for the half-Laplace model is surprisingly concise due to the simplicity of the marginal density of $\eta_j$. In contrast, the half-horseshoe density involves integral representation and complicates the proof. For details, see the Supplementary Materials. 

The robustness property shown above is weaker than the usual tail-robustness (i.e., $|E[\eta_{i_\ast} | z] - z_{i_\ast}| \to 0$ as $z_{i_\ast} \to \infty$). 
As stated in the introduction, the analysis of the posterior means under our setting is more difficult than a standard setting where the location parameters $\eta_j$'s are conditionally independent a posteriori \citep{carvalho2010horseshoe}. The difference from the standard setting, and the source of weakened robustness, is the correlation of observation $z$ (or, equivalently, the correlation of the original location parameter $\theta$). The robustness property of horseshoe estimators under such a correlation has not been studied. 

It is noteworthy that not only the half-horseshoe model but also the half-Laplace model can achieve the weak tail-robustness. The difference between the two models in point estimation is evaluated numerically in Section~\ref{simulation}, where outlier $z_{i_\ast}$ takes a finite value. The half-normal prior, whose density tail is thinner than that of the half-Laplace distribution, does not have the weak tail-robustness. To see this, consider a simple model for a single observation:  $z_j|\eta _j \sim N(\eta_j, \sigma^2)$ and $\eta_j \sim N(0, 1)$, ignoring both monotonicity constraint and correlation for simplicity. Then we obtain the posterior mean estimator $E[\eta_j | z_j] = (1 + \sigma^2)^{-1}z_j$, which does not have the weak tail-robustness since $| E[\eta_j | z_j] - z_j | / z_j \to \sigma^2 / (1+\sigma^2)$ as $z_j\to\infty$. 
This example, together with Theorem~\ref{tail_robust}, shows the necessity of the prior for $\eta_j$ with a heavier tail than the Gaussian one to achieve the weak tail-robustness.


%
\subsection{Efficiency in handling sparsity}
The half-horseshoe prior can also handle sparsity in $\eta_j$ efficiently for its density spike at the origin. 
Following \cite{carvalho2010horseshoe}, we formalize this efficiency in terms of the rate of convergence of the Kullback-Leibler divergence between the true sampling density and its posterior mean estimator. 
By showing this efficiency, we claim that under the sparse setting, the (half) horseshoe prior can reconstruct the true sampling density faster than any other prior with a bounded density at the origin. 
Unlike the weak-tail robustness, the proof completely parallels that of the standard horseshoe prior \citep[Theorem~4]{carvalho2010horseshoe}.

For any parameter $\eta \ge 0$, let denote the probability density function of $x$ under $\eta$ as $f_\eta(x) = N(x | \eta, 1), x \in \mathbb{R}$. 
Let $\eta_0$ denote the true value of the parameter, and let $\Delta_{KL}(\eta_0 || \eta) = \Delta_{KL}\{f_{\eta_0} || f_\eta \}$
denote the Kullback-Leibler divergence of $f_\eta$ from $f_{\eta_0}$.
Because $f_\eta$ is assumed to be the normal density, we can express the Kullback-Leibler divergence explicitly as 
\[
\Delta_{KL}(\eta_0 || \eta)
=
\frac{(\eta - \eta_0)^2}{2}.
\]
Suppose we have $n$ observations $y_1, ..., y_n$.
Let $p$ be the marginal prior density on $\eta$, and for each $k = 1,\dots,n$, let
$\pi_k(d\eta | y_1, ..., y_k)$ be the posterior density based on $k$ observations $y_1, ..., y_k$, respectively.
Then the posterior mean estimator of the density based on the $k$ observations is defined as
\[
\hat{f}_k(x)
=
\int_0^\infty f_\eta(x) \pi_k(d \eta | y_1, ..., y_k).
\]
As the accuracy of estimation, we use the following Ces\'{a}ro-average risk:
\[
R_n
=
\frac{1}{n} \sum_{k=1}^n \Delta_{KL}\{f_{\eta_0} || \hat{f}_k \}.
\]
The following theorem shows that the half-horseshoe prior leads to a super-efficient rate of convergence when $\eta_0 = 0$.
\begin{thm}
	Suppose the true sampling model is $N(\eta_0, 1)$, and we assume the half-horseshoe prior for $\eta_0$.
	Then, the rates of convergence of Ces\'{a}ro-average risk $R_n$ under $\eta_0 = 0$ and $\eta_0 > 0$ are given as follows
	\[
	R_n = \begin{cases}
	O[n^{-1}\{\log n - \log \log n \}], \quad \eta_0 = 0, \\
	O\{n^{-1}\log n \}, \quad \eta_0 > 0.
	\end{cases}
	\]
	\label{kl_risk}
\end{thm}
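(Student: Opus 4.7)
My plan is to follow the classical Clarke--Barron route used in the proof of Theorem~4 of \citet{carvalho2010horseshoe}, with the main substantive task being a sharp lower bound on the prior mass that the half-horseshoe assigns to shrinking neighbourhoods of the true parameter. The key tool is a standard Ces\'{a}ro risk inequality obtained by applying Jensen's inequality to the marginal likelihood $m_n(y_1,\ldots,y_n) = \int \prod_{k=1}^n f_\eta(y_k)\, p(\eta)\, d\eta$: for every measurable $A \subset [0,\infty)$ with $p(A) > 0$,
\[
R_n \;\le\; \frac{-\log p(A)}{n} + \sup_{\eta \in A} \Delta_{KL}(f_{\eta_0} || f_\eta).
\]
Because $f_\eta = N(\cdot \mid \eta, 1)$, the supremum on the right equals $\tfrac{1}{2}\sup_{\eta \in A}(\eta - \eta_0)^2$. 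I would take $A = A_n = [\eta_0 - \varepsilon_n, \eta_0 + \varepsilon_n] \cap [0,\infty)$, which bounds this supremum by $\varepsilon_n^2/2$ and reduces the problem to lower-bounding $p(A_n)$ and then optimising $\varepsilon_n$.

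For $\eta_0 > 0$, I would choose $n$ large enough that $\varepsilon_n < \eta_0/2$ and use the fact that the half-horseshoe marginal density is continuous and strictly positive at $\eta_0$, giving $p(A_n) \ge c(\eta_0)\,\varepsilon_n$. The bound then becomes
\[
R_n \;\le\; \frac{\log(1/\varepsilon_n) + O(1)}{n} + \frac{\varepsilon_n^2}{2},
\]
and the choice $\varepsilon_n^2 = \log n / n$ balances both terms to deliver $R_n = O(n^{-1}\log n)$.

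The delicate case is $\eta_0 = 0$, for which $A_n = [0,\varepsilon_n]$ sits inside the density spike. Here I would work from the scale-mixture representation of the marginal prior, $p(\eta) \propto \int_0^\infty \phi(\eta; 0, \tau^2)(1+\tau^2)^{-1}\,d\tau$ where $\phi$ is the Gaussian density, and derive the same asymptotic as in \citet{carvalho2010horseshoe}, namely $p(\eta) \asymp \log(1/\eta)$ as $\eta \downarrow 0$; the half-horseshoe differs from the horseshoe only by a factor of two on $[0,\infty)$, so the estimate transfers. Integrating yields $p(A_n) \gtrsim \varepsilon_n \log(1/\varepsilon_n)$, and hence $-\log p(A_n) \le \log(1/\varepsilon_n) - \log\log(1/\varepsilon_n) + O(1)$. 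With $\varepsilon_n^2 = \log n / n$ the two contributions balance and produce the sharper rate $R_n = O(n^{-1}(\log n - \log\log n))$.

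The hard part will be establishing the precise $\log(1/\eta)$ behaviour of the half-horseshoe marginal near zero. This is the single step that drives the super-efficiency and distinguishes the horseshoe family from priors with bounded densities at the origin. Fortunately, the integral representation above is essentially the one analysed in \citet{carvalho2010horseshoe}, and the standard splitting $\int_0^\infty = \int_0^\eta + \int_\eta^\infty$ at $\tau = \eta$ transfers verbatim once the factor of two from truncation is accounted for; after this, the optimisation of $\varepsilon_n$ and the appeal to the Clarke--Barron inequality are routine.
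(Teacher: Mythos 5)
Your proposal mirrors the paper's proof: both rest on the Clarke--Barron Ces\`aro risk bound (the paper's Lemma~A1), both identify the KL neighbourhood as an interval of width on the order of $\sqrt{\varepsilon}$, both lower-bound the prior mass of that interval using the $\log(1/\eta)$ spike of the horseshoe density near zero (which the paper imports directly from Theorem~1 of Carvalho et al.\ rather than re-deriving from the scale-mixture integral as you plan, but this is the same fact), and both handle $\eta_0 > 0$ by a local lower bound on the density. Your choice $\varepsilon_n^2 = \log n/n$ differs cosmetically from the paper's $\varepsilon = 1/n$, but both yield the stated rates, so this is the same argument.
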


As stated in \cite{carvalho2010horseshoe}, the optimal convergence rate of Ces\'{a}ro-average risk is $O\{n^{-1}\log n \}$ when using a priors with bounded density. Therefore,
Theorem \ref{kl_risk} implies that under the sparse setting $\eta_0 = 0$, the density estimator based on the half-horseshoe prior converges to the true density faster than those based on other bounded priors on $[0, \infty)$.
Theorem \ref{kl_risk} also implies that for $\eta_0 > 0$, the density estimator based on the half-horseshoe prior should converge at least as fast as those based on other bounded priors.

\section{Simulation study}
\label{simulation}
We here investigate the finite sample performance of the proposed method together with other methods. We generated $n=100$ observations from $y_i \sim N(\theta_i, 0.25) $ where $\theta_i$ is the value of a monotone function $f(x)$ at $x=i$. In generating the true monotone function $f$, we adopted the following five scenarios: 
\begin{align*}
&\text{(I)}\,\ f(x) = 2, \\
&\text {(II)}\,\  f(x) = 
\begin{cases}
0 & (0 \le x \le 25)\\
2.5 & (25 < x \le 80)  \\
3 & (80 < x \le 100)
\end{cases}, \\
&\text{(III)}\,\ f(x) = 0.04x, \\
&\text{(IV)}\,\ f(x) = 
\begin{cases}
0.02x & (0 \le x \le 20)\\
0.02x + 1 & (20 < x \le 50)  \\
0.02x + 1.5 & (50 < x \le 80)  \\
0.02x + 1.75 & (80 < x \le 100)
\end{cases},  \\
&\text{(V)}\,\ f(x) = 
\frac{1}{4.4}\exp\{ 0.05x - 2\} + 1.
\end{align*}
These true functions are constant, piecewise constant, linear, piecewise linear and exponential shape, respectively. In scenarios (I) and (III), the true function has less frequent increments in comparison to the other scenarios. Note that in scenarios (II) and (IV), there are a few jumps in the function values. 

We estimated $\theta_i$'s using the following six methods:
\begin{itemize}
	\item HH: The proposed method with the half-horseshoe prior for $\eta_j$.
	\item HL: The method with the half-Laplace prior for $\eta_j$.
	\item HN: The method with the half-normal prior for $\eta_j$.
	\item TF: The Bayesian trend-filtering with the horseshoe prior proposed by \cite{faulkner2018locally}. 
	\item GP: The Gaussian process regression method (e.g., \cite{williams2006gaussian}). 
	\item GPP: The Gaussian process projection method proposed by \cite{lin2014bayesian}.
\end{itemize}

To implement the posterior analysis in the GP and GPP methods, we employed the Python package "GPflow". 
On the covariance kernel of Gaussian process, we assigned the same prior used in the simulation in \cite{lin2014bayesian}.
All the methods require computations by  Markov chain Monte Carlo methods. For each dataset, we generated 2,500 posterior samples after discarding 500 samples as a burn-in period. 
We computed the posterior means of $\theta_i$'s as their point estimates in all methods. Note that the estimated function is guaranteed to be monotonic in the HH, HL, HN and GPP methods, while it is not necessarily monotonic in the TF and GP methods. 
The performance of these point estimators was evaluated by the root mean squared error (RMSE) defined as $\sqrt{n^{-1} \sum_{i=1}^n (\hat{\theta}_i - \theta_i)^2}$. We also computed $95\%$ credible intervals of $\theta_i$'s and evaluated their performance using the coverage probability (CP) and average length (AL). We repeated this process for $1,000$ times.

In Table \ref{table_regular}, we presented the averaged values of the RMSEs, CPs and ALs in all the methods. We can see that under scenarios (I), (III) and (V), the GPP method performs well in terms of RMSE, CP and AL, while the proposed HH method is comparable in RMSE. For scenarios (II) and (IV), where the true function has a few jumps, the HH method performs the best in RMSE, as well as achieves high coverage rates by the interval estimates of reasonable lengths. Seeing this result, we confirm that the
proposed half-horseshoe prior can flexibly represent functions with abrupt changes, as expected.

In addition, we reported an example of posterior fits under three methods HH, TF and GPP in Figure~\ref{example_fits_1}. 
We can see that the HH method is most adaptive to the abrupt increases in scenarios (II) and (IV).
The same graphical illustration for the other three methods (the HL, HN and GP methods) is reported in the Appendix.
Here, we presented the posterior plots in Scenarios (II) only in Figure~\ref{example_fits_3}. 
We can see that the HH method is more adaptive to abrupt increases in function values than the HL method.
This result shows that, although both HL and HH models achieve the weak tail-robustness in theory (Theorem~\ref{tail_robust}), the horseshoe model can respond to jumps more quickly for its heavier density tail.

\begin{table}[!htbp]
	\begin{center}
		\begin{tabular}{cclcccccc} \toprule
			Scenario &       & HH    & HL     & HN   & TF    & GP   & GPP  \\ \midrule
			& RMSE  & 0.034  & 0.044   &0.057  & 0.094  & 0.030 & 0.021    \\
			(I)     & CP    &91.9   &90.3    &83.4  &99.7   &98.1  &99.0    \\
			& AL    &0.121   &0.127    &0.151  &0.469   &0.138  &0.118 \\ \midrule
			& RMSE  &0.087   &0.312    &0.517  &0.115   &0.222  &0.200  \\
			(II)    & CP    &91.9   &90.0    &83.9  &98.6   &91.3  &90.4  \\
			& AL    &0.220   &0.345    &0.411  &0.501   &0.535  &0.315  \\ \midrule
			& RMSE  &0.079   &0.055    &0.057  &0.090   &0.05  &0.05  \\
			(III)   & CP    &98.8   &99.4    &99.5  &99.3   &94.0  &94.4  \\
			& AL    &0.357   &0.284    &0.257  &0.467   &0.193  &0.190  \\ \midrule
			& RMSE  &0.081   &0.139    &0.166  &0.127   &0.156  &0.141  \\
			(IV)   & CP    &97.1   &77.1    &66.7  &96.6   &81.0  &82.2  \\
			& AL    &0.322   &0.291    &0.275  &0.493   &0.357  &0.321  \\ \midrule
			& RMSE  &0.110   &0.159    &0.231  &0.094   &0.075  &0.072  \\
			(V)  & CP    &89.8   &65.8    &43.0  &99.5   &94.0  &94.2  \\
			& AL    &0.352   &0.308    &0.311  &0.460   &0.279  &0.256 \\  \bottomrule
		\end{tabular}
	\end{center}
	\caption{Averaged values of root mean squared errors (RMSEs) of point estimators, and coverage probability (CP) and average length(AL) of $95\%$ credible intervals under five scenarios with $n=100$.}
	\label{table_regular}
\end{table}

\begin{figure}[htbp]
	\centering
	\includegraphics[keepaspectratio, width=5.00in]{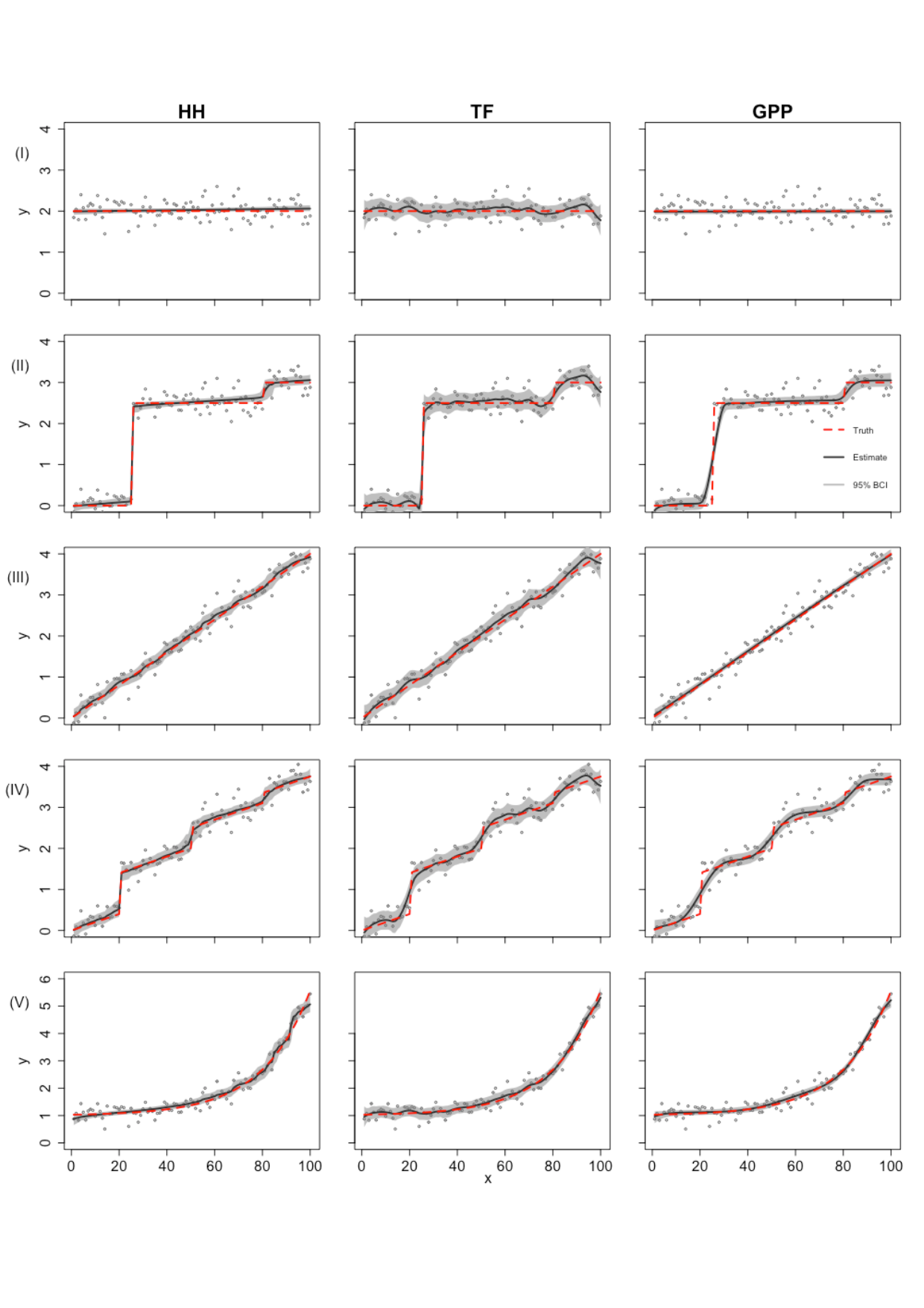}
	\caption{Example fits by the HH, TF and GPP methods under the five scenarios. Plots show true functions (dashed red lines), posterior means(solid black lines), and associated $95 \%$ Bayesian credible interval (gray bands) for each $\theta_i$. Values between observed locations are interpolated for plotting.}
	\label{example_fits_1}
\end{figure}

\begin{figure}[htbp]
	\centering
	\includegraphics[keepaspectratio, width=5.00in]{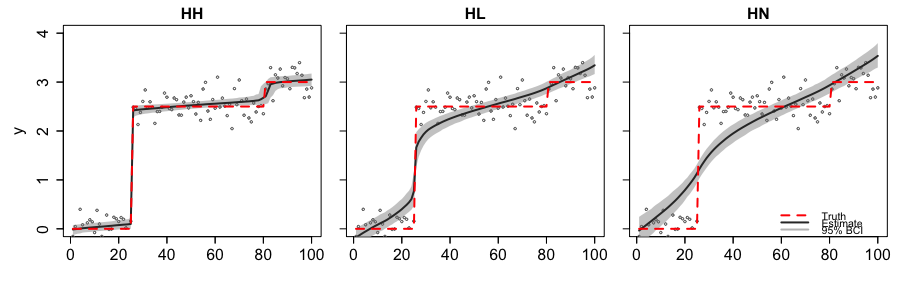}
	\caption{Example fits by the HH, HL and HN methods under the scenarios (II). Plots show true functions (dashed red lines), posterior means(solid black lines), and associated $95 \%$ Bayesian credible interval (gray bands) for each $\theta_i$. Values between observed locations are interpolated for plotting.}
	\label{example_fits_3}
\end{figure}

Next, we present a simulation result when data is observed at irregularly spaced locations. We randomly picked 25 elements from $\{1, 2, ..., 100\}$ and sorted them as $x_1 < x_2 < \cdots < x_{25}$. We generated 25 observations from $y_k \sim N(f(x_k), 0.25), k=1, ..., 25$ under the five scenarios for the true monotone function. We estimated the 100 function values $f(i)$, $i=1, ..., 100$ from the 25 observations $y_k$'s, including those without the corresponding observations, by using the HH, GP and GPP methods. To interpolate $f(x)$ for $x\in \{ x_1,\dots ,x_{25}\}$, we first estimated $f(x_k)$'s by the method described in Section 2.4, then computed the function values between the observed locations by the linear interpolation.
In Table \ref{table:irregular}, we reported the averaged values of RMSEs, CPs and ALs in this setting. 
We can observe a result similar to the case of regular grids. For scenarios (II) and (IV), the HH method performs quite well in terms of RMSE and has short credible intervals with good coverage probabilities.

\begin{table}[!htbp]
	\centering
	\begin{tabular}{cclccc} \toprule
		Scenario &       & HH    & GP   & GPP  \\ \midrule
		& RMSE  &0.084   &0.056  &0.042  \\
		(I)   & CP    &88.0   &100.0  &99.3  \\
		& AL    &0.278   &0.300  &0.278  \\ \midrule
		& RMSE  &0.262   &0.385  &0.364  \\
		(II)  & CP    &81.6   &91.5  &89.3  \\
		& AL    &0.420   &1.147  &0.762  \\ \midrule
		& RMSE  &0.139   &0.097  &0.095  \\
		(III) & CP    &96.0   &98.4  &98.5  \\
		& AL    &0.566   &0.445  &0.446  \\ \midrule
		& RMSE  &0.178   &0.243  &0.224  \\
		(IV)  & CP    &92.2   &78.4  &78.5  \\
		& AL    &0.556   &0.605  &0.566  \\ \midrule
		& RMSE  &0.194   &0.210  &0.204  \\
		(V)   &CP     &92.1   &96.0  &96.4  \\
		& AL    &0.592   &0.700  &0.615  \\ \bottomrule
	\end{tabular}
	\caption{The averaged values of the RMSEs, CPs and ALs in the case of irregular grids.}
	\label{table:irregular}
\end{table}

\section{Data analysis}
\label{data_analysis}
In this section, we estimate the trend of the
yearly volume of the Nile River at Aswan by applying the proposed method. 
Our data consists of measurements of the annual flow of the Nile River from $1871$ to $1970$, obtained from \cite{durbin2012time} and publicly available in $\mathsf{R}$-package \textit{datasets} \citep{datasets}. 
This data has a decreasing trend as a whole and has an apparent change point near $1898$ when a dam was built. 
In this study, we are interested in whether or not the estimated trends can detect the change point. 
To this end, we apply the proposed isotonic regression method with the HH prior to this data, as well as the HL, HN and GPP methods. 
In each Gibbs sampler, we generate $5,000$ posterior samples after discarding $1,000$ posterior samples as burn-in.

Figure \ref{nile_plots} shows the posterior means and 95\% credible intervals of the function values. 
It is observed in this figure that the proposed method with the half-horseshoe prior successfully detects the abrupt change in $1898$. In contrast, this change is less emphasized, or not at all detected by the other methods. This is another example where the sufficient flexibility of the proposed method is utilized to allow large jumps.

\begin{figure}[htbp]
	\centering
	\includegraphics[keepaspectratio, width=5.65in]{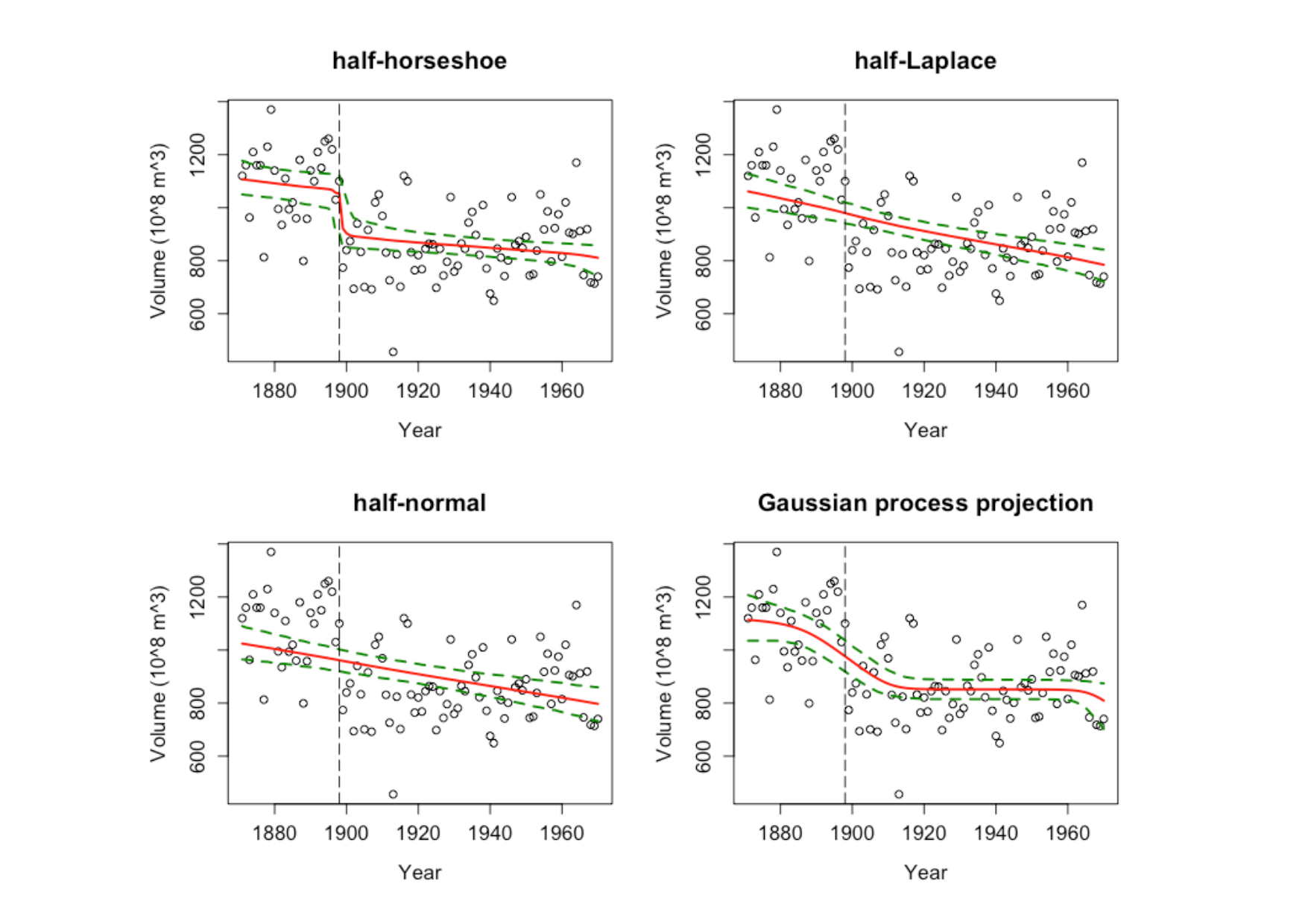}
	\caption{Estimates for the Nile river data by different methods. Solid lines are posterior means and dashed lines are $95\%$ credible intervals. Dashed vertical lines indicate the changepoint (1898).}
	\label{nile_plots}
\end{figure}

\section{Concluding remarks}
In this research, we proposed the half-horseshoe prior for the robust posterior inference of monotone functions. It should be noted that the half-horseshoe prior is a continuous distribution, where we place no probability mass on $\{ \theta _j = \theta _{j+1} \}$, or $\{\eta_j = 0\}$. The extension to the point-mass mixture, known as the spike-and-slab prior, is of great interest. It is well known, however, that the ability of handling sparsity under the spike-and-slab prior comes at the increased computational cost for the full posterior inference. This problem is expected to be inherited from the problem of estimating monotone functions. The trade-off of the model flexibility and computational cost is left for future research.

The application of the half-horseshoe prior to the higher-order differences is trivial. 
Although the first-order difference is extensively studied in this research to focus on the monotonicity, the second-order difference of function values is also of great interest to realize convex constraints on the target function. 
Investigating the methodological and theoretical properties of such methods for convex function estimation would be an important topic of future research.

In addition, the proposed method can be extended easily to situations having multiple covariates. 
For example, when $p$ covariates $(x_{i1},\ldots,x_{ip})$ are obtained, we may consider an additive model, $y_i=\sum_{k=1}^p f_k(x_{ik})+\epsilon_i$. 
If any functions of $f_1,\dots ,f_p$ are known to be monotone, then we can model the monotonicity by assigning the proposed prior independently for those functions. 
The posterior computation for such an additive model is straightforwardly implemented by Bayesian backfitting \citep{hastie2000bayesian}, combined with the conditional conjugacy of the proposed half-horseshoe priors.

\clearpage 
\appendix
\vspace{1cm}
\begin{center}
{\bf \large Appendix}
\end{center}

\section*{A1 \,\ Step-by-step Gibbs sampling procedures}
We here provide Gibbs sampling producers under the proposed half-horseshoe prior in both the regular and irregular grid cases.
We denote the GIG distribution with parameter $a > 0, b > 0, p \in \mathbb{R}$ by $GIG(a, b, p)$. That is, the density of $GIG(a, b, p)$ is given by
\[
f(x) = \frac{(a/b)^{p/2}}{2K_p(\sqrt{ab})} x^{p-1} e^{-(ax + b/x)/2}, 
\quad x > 0. 
\]
We also 
denote the inverted gamma distribution with shape parameter $\alpha > 0$ and scale parameter $\beta > 0$ by $IG(\alpha, \beta)$. That is, the density of $IG(\alpha, \beta)$ is 
\[
f(x) = \frac{\beta^\alpha}{\Gamma(\alpha)}\left(\frac{1}{x}\right)^{\alpha + 1}
e^{-\beta / x}, \quad x > 0.
\]
For $j = 1, ..., n$, let $d_j$ be the $j$-th columns vector of the lower triangular matrix $D$ and define a vector $e_j \in \mathbb{R}^n$ by
\[
e_j = y - \sum_{k \neq j}\eta_k d_k.
\]
$\|\cdot \|$ denotes the Euclidean norm.
The summary of the posterior sampling under the half-horseshoe prior in the general case (irregular grids) is as follows:
\begin{itemize}
	\item[-]  {\bf (Sampling from function values)} \ \  Sample $\eta_1$ from $N(m_1, s_1^2)$ where
	\[
	m_1 = \frac{e_1^\top d_1}{ \|d_1\|^2 + (1/\tau_1^2) }, \ \ \ \mathrm{and} \ \ \ s_1^2 = \frac{\sigma ^2}{\|d_1\| ^2 + (1/ \tau _1^2)},
	\]
	and sample $\eta_j (j=2, ..., n)$ from $N_+(m_j, s_j^2)$, where
	\[
	m_j = \frac{e_j^\top d_j}{ \|d_j\|^2 + \{1/\lambda ^2\tau_j^2(x_j - x_{j-1})\} }, \ \ \ \mathrm{and} \ \ \ s_j^2 = \frac{\sigma ^2}{\|d_j\| ^2 + \{1/ \lambda ^2\tau_j^2(x_j - x_{j-1})\}}.
	\]
	\item[-] {\bf (Sampling from local shrinkage parameters)} \ \  
	 First, sample $\nu_1$ from $Ga(3/2, 1+\tau_j^2)$, and sample $\nu_j (j=2, ..., n)$ from $Ga(1, 1+\tau_j^2)$.
	Then, sample  $\tau _1^2$ from $GIG(2\nu_1, \eta_1^2/\sigma^2,1/2)$, and sample $\tau_j^2 (j=2,\ldots,n)$ from $GIG(2\nu_j, \eta_j^2 / \{\sigma^2 \lambda ^2(x_j - x_{j-1})\}, 0)$.
	\item[-] {\bf (Sampling from scale parameter)} \ \ 
	 First, sample $\xi$ from $Ga(1, 1 + \lambda^2)$, and then sample $\lambda^2$ from 
	\[
	GIG\left(2\xi, \frac{1}{\sigma^2}\sum_{j=2}^n \frac{\eta_j^2}{\tau_j^2(x_j - x_{j-1})}, \frac{2-n}{2}\right).
	\]
	\item[-] {\bf (Sampling from error variance)} \ \  
	 Sample $\sigma^2$ from
	\[
	IG\left(n, \frac{1}{2}\left( \|y - D\eta\|^2 + \frac{1}{\lambda^2}\sum_{j=2}^n \frac{\eta_j^2}{\tau_j^2(x_j - x_{j-1})} + \frac{\eta_1^2}{\tau_1^2} \right)\right).
	\]
\end{itemize}
Note that the sampling algorithm under the regular grid can be obtained by setting $x_j=j$.

\section*{A2 \,\ Proof of Theorem 
1}

\subsection*{Half-Laplace priors}

We marginalize local scale parameters $\tau_j$ out and work on the following model: 
\begin{align}
&\z \sim {\rm{N}}_n ( \z | \bta , \bSi ) \text{,} \non \\
&\bta \sim e^{- | \eta _1 | / (2 c_1 )} \prod_{i = 2}^{n} e^{- \eta _i / (2 c_i )} \text{,} \non 
\end{align}
for $c_1 , \dots , c_n > 0$, where $\bta \in \mathbb{R} \times (0, \infty )^{n - 1}$. 
Fix $i_{*} = 2, \dots , n$. 
Then, 
as $z_{i_{*}} \to \infty $ by the dominated convergence theorem, we have 
\begin{align}
E[ \eta _{i_{*}} | \z ] - z_{i_{*}} &= \frac{ \displaystyle \int_{\mathbb{R} \times (0, \infty )^{n - 1}} ( \eta _{i_{*}} - z_{i_{*}} ) \exp \Big( - {1 \over 2} \Big[ ( \bta - \z )^{\top } \bSi ^{- 1} ( \bta - \z ) + {| \eta _1 | \over c_1} + \sum_{i = 2}^{n} {\eta _i \over c_i} \Big] \Big) d\bta }{ \displaystyle \int_{\mathbb{R} \times (0, \infty )^{n - 1}} \exp \Big( - {1 \over 2} \Big[ ( \bta - \z )^{\top } \bSi ^{- 1} ( \bta - \z ) + {| \eta _1 | \over c_1} + \sum_{i = 2}^{n} {\eta _i \over c_i} \Big] \Big) d\bta } \non \\
&= \frac{ \displaystyle \int_{[(- \z ) + \mathbb{R} \times (0, \infty )^{n - 1}]} \xi _{i_{*}} \exp \Big( - {1 \over 2} \Big[ \bxi ^{\top } \bSi ^{- 1} \bxi + {| \xi _1 + z_1 | \over c_1} + \sum_{i = 2}^{n} {\xi _i \over c_i} \Big] \Big) d\bxi }{ \displaystyle \int_{[(- \z ) + \mathbb{R} \times (0, \infty )^{n - 1}]} \exp \Big( - {1 \over 2} \Big[ \bxi ^{\top } \bSi ^{- 1} \bxi + {| \xi _1 + z_1 | \over c_1} + \sum_{i = 2}^{n} {\xi _i \over c_i} \Big] \Big) d\bxi } \non \\
&\to \frac{ \displaystyle \int_{\mathbb{R} \times [(- \z _{- i_{*}}) + \mathbb{R} \times (0, \infty )^{n - 2}]} \xi _{i_{*}} \exp \Big( - {1 \over 2} \Big[ \bxi ^{\top } \bSi ^{- 1} \bxi + {| \xi _1 + z_1 | \over c_1} + \sum_{i = 2}^{n} {\xi _i \over c_i} \Big] \Big) d( \xi _{i_{*}} , \bxi _{- i_{*}} ) }{ \displaystyle \int_{\mathbb{R} \times [(- \z _{- i_{*}}) + \mathbb{R} \times (0, \infty )^{n - 2}]} \exp \Big( - {1 \over 2} \Big[ \bxi ^{\top } \bSi ^{- 1} \bxi + {| \xi _1 + z_1 | \over c_1} + \sum_{i = 2}^{n} {\xi _i \over c_i} \Big] \Big) d( \xi _{i_{*}} , \bxi _{- i_{*}} ) } \non 
\end{align}
Since the right-hand side is finite, we obtain 
\begin{align}
\lim_{z_{i_{*}} \to \infty } {E[ \eta _{i_{*}} | \z ] - z_{i_{*}} \over z_{i_{*}}} = 0 \text{.} \non 
\end{align}

\subsection*{Half-horseshoe priors}

Recall that we consider the following model:
\begin{gather*}
\bm{z} \sim N(\bm{z} | \bm{\eta}, \Sigma), \\
\bm{\eta} \sim N(\eta_1 | 0, \tau_1^2) \prod_{i=2}^n N_+(\eta_i | 0, \tau_i^2), \\
\bm{\tau} \sim \prod_{i=1}^n \pi(\tau_i) = \prod_{i=1}^n \frac{1}{1 + \tau_i^2}, 
\end{gather*}
where $\bm{z} = ( z_i )_{i = 1}^{n}$, $\bm{\eta} = ( \eta _i )_{i = 1}^{n}$, and $\bm{\tau} = ( \ta _i )_{i = 1}^{n}$. 
Here we describe the notations used in the following proof. 
For an integer $k$, we denote the origin of $\mathbb{R} ^k$ and the $k \times k$ identity matrix as $\bm{0}^{(k)}$ and $\bm{I}_k$, respectively. For $a_1, ..., a_k \in \mathbb{R}$,  
$\text{diag}(a_1, ..., a_k)$ denotes the diagonal matrix whose $(i, i)$ element is $a_i$. $\bm{e}_i^{(k)}$ denotes the vector in $\mathbb{R}^k$ with a $1$ in the $i$ th coordinate and $0's$ elsewhere.
For a vector $\a \in \mathbb{R}^k$, $\a _{-j}$ denotes the vector with length $k-1$ obtained by removing the $j$ th coordinate of $a$. 
For a set $A$, we denote its cardinality as $|A|$. 
For a set $A \subset \mathbb{R}^k$ and $c \in \mathbb{R}^k$, we define a set $A + c = \{a + c: a \in A \}$. For a set $A \subset \mathbb{R}^k$ and $x \in \mathbb{R}^k$, 
$1(x \in A)$ is $1$ if $x \in A$, and $0$ otherwise.
$N_k( \bmu , \V )$ denotes the $k$-variate Gaussian distribution with mean $\bmu $ and covariance matrix $\V $, and for $B \subset \mathbb{R}^k$, $P(N_k( \bmu , \V ) \in B) $ means the probability $P( \X \in B)$ for $\X \sim N_k ( \bmu , \V )$. 
$a \equiv b$ denotes $b$ is defined by $a$.
$f(x) \sim g(x)$ denotes the function $f$ is asymptotically equal to the function $g$.
We will prove the claim in six steps. 

\noindent
\textbf{Step 1: Representation of posterior mean.}
First, we give a representation of the posterior mean that characterizes the prior's tail robustness in terms of a score function.
Let $m$ be the marginal density of $\bm{z}$.
Since
\begin{align*}
&m( \z ) = \int_{\mathbb{R} \times (0, \infty )^{n - 1}} {\rm{N}}_n ( \z | \bta , \bSi ) p( \bta ) d\bta \text{,} \\
&{\frac{\pd m( \z )}{\pd \z}}  = \int_{\mathbb{R} \times (0, \infty )^{n - 1}} \{ - {\bSi }^{- 1} ( \z - \bta ) \} {\rm{N}}_n ( \z | \bta , \bSi ) p( \bta ) d\bta \text{,} 
\end{align*}
we have the representation
\begin{align*}
E[\bm{\eta} | \bm{z}] - \z &= \bSi {\frac{1}{m( \z )}} {\frac{\pd m( \z )}{\pd \z}}.
\end{align*}
In the following, we focus on the asymptotic behavior of the score function, namely, $\{ 1 / m( \z ) \} \{ \pd / ( \pd \z ) \} m( \z )$.

\noindent
\textbf{Step 2: First reduction.}
Define the diagonal matrix  $\bm{T} = \diag ( \ta _1 , \dots , \ta _n )$.
We have
\begin{align*}
	&m( \z ) \notag \\
	&\propto \int_{(0, \infty )^n} \Big( \int_{\mathbb{R} \times (0, \infty )^{n - 1}} \Big\{ \prod_{i = 1}^{n} {\pi ( \ta _i ) \over \ta _i} \Big\} \exp \Big[ - {1 \over 2} \{ ( \bta - \z )^{\top } \bSi ^{- 1} ( \bta - \z ) + \bta ^{\top } \bm{T} ^{- 2} \bta \} \Big] d\bta \Big) d\bm{\tau} \\
	&= \int_{(0, \infty )^n} \Big( \Big\{ \prod_{i = 1}^{n} {\pi ( \tau _i ) \over \tau _i} \Big\} \exp \Big[ - {1 \over 2} \z ^{\top } \bSi ^{- 1} \{ \bSi - ( \bSi ^{- 1} + \bm{T} ^{- 2} )^{- 1} \} \bSi ^{- 1} \z \Big]   \\
	&\quad \times \int_{\mathbb{R} \times (0, \infty )^{n - 1}} \exp \Big[ - {1 \over 2} \notag \\
	&\quad \times \{ \bta - ( \bSi ^{- 1} + \bm{T} ^{- 2} )^{- 1} \bSi ^{- 1} \z \} ^{\top } ( \bSi ^{- 1} + \bm{T} ^{- 2} ) \{ \bta - ( \bSi ^{- 1} + \bm{T} ^{- 2} )^{- 1} \bSi ^{- 1} \z \} \Big] d\bta \Big) d\btau \\
	&\propto \int_{(0, \infty )^n} \Big( \Big\{ \prod_{i = 1}^{n} {\pi ( \tau _i ) \over \tau _i} \Big\} \exp \Big[ - {1 \over 2} \z ^{\top } \bSi ^{- 1} \{ \bSi - ( \bSi ^{- 1} + \bm{T} ^{- 2} )^{- 1} \} \bSi ^{- 1} \z \Big] {1 \over | \bSi ^{- 1} + \bm{T} ^{- 2} |^{1 / 2}}   \\
	&\quad \times \int_{\mathbb{R} \times (0, \infty )^{n - 1}} {\rm{N}}_n ( \bta | ( \bSi ^{- 1} + \bm{T} ^{- 2} )^{- 1} \bSi ^{- 1} \z , ( \bSi ^{- 1} + \bm{T} ^{- 2} )^{- 1} ) d\bta \Big) d\btau \\
	&= \int_{(0, \infty )^n} \Big( \Big\{ \prod_{i = 1}^{n} {\pi ( \tau _i ) \over \tau _i} \Big\} \exp \Big[ - {1 \over 2} \z ^{\top } \bSi ^{- 1} \{ \bSi - ( \bSi ^{- 1} + \bm{T} ^{- 2} )^{- 1} \} \bSi ^{- 1} \z \Big] {1 \over | \bSi ^{- 1} + \bm{T} ^{- 2} |^{1 / 2}}   \\
	&\quad \times \int_{(0, \infty )^{n - 1}} {\rm{N}}_{n - 1} ( \bta _{- 1} | \E _2 ( \bSi ^{- 1} + \bm{T} ^{- 2} )^{- 1} \bSi ^{- 1} \z , \E _2 ( \bSi ^{- 1} + \bm{T} ^{- 2} )^{- 1} {\E _2}^{\top } ) d{\bta _{- 1}} \Big) d\btau \\
	&\equiv \int_{(0, \infty )^n} \Big( \Big\{ \prod_{i = 1}^{n} {\pi ( \tau _i ) \over \tau _i} \Big\} \exp \Big[ - {1 \over 2} \z ^{\top } \bSi ^{- 1} \{ \bSi - ( \bSi ^{- 1} + \bm{T} ^{- 2} )^{- 1} \} \bSi ^{- 1} \z \Big] {1 \over | \bSi ^{- 1} + \bm{T} ^{- 2} |^{1 / 2}} \notag \\
	&\quad \times F( \btau ; \z ) \Big) d\btau ,  
	\end{align*}
where we set $\E _2 = ( \bm{0} ^{(n - 1)} , \I _{n - 1} ) \in \mathbb{R} ^{(n - 1) \times n}$ and
\[
F( \btau ; \z )
=
\int_{(0, \infty )^{n - 1}} {\rm{N}}_{n - 1} ( \bta _{- 1} | \E _2 ( \bSi ^{- 1} + \bm{T} ^{- 2} )^{- 1} \bSi ^{- 1} \z , \E _2 ( \bSi ^{- 1} + \bm{T} ^{- 2} )^{- 1} {\E _2}^{\top } ) d{\bta _{- 1}} .
\]
Notice that
\begin{align*}
&\bSi ^{- 1} \{ \bSi - ( \bSi ^{- 1} + \bm{T} ^{- 2} )^{- 1} \} \bSi ^{- 1} \notag \\
&= \bSi ^{- 1} [ \bSi - \{ \bSi - \bSi \bm{T} ^{- 1} ( \I _n + \bm{T} ^{- 1} \bSi \bm{T} ^{- 1} )^{- 1} \bm{T} ^{- 1} \bSi \} ] \bSi ^{- 1}   \\
&= \bm{T} ^{- 1} ( \I _n + \bm{T} ^{- 1} \bSi \bm{T} ^{- 1} )^{- 1} \bm{T} ^{- 1}   
\end{align*}
and
\begin{align*}
\Big\{ \prod_{i = 1}^{n} {\pi ( \tau _i ) \over \tau _i} \Big\} {1 \over | \bSi ^{- 1} + \bm{T} ^{- 2} |^{1 / 2}} = {\prod_{i = 1}^{n} \pi ( \tau _i ) \over | \I _n + \bm{T} \bSi ^{- 1} \bm{T} |^{1 / 2}}.  
\end{align*}
Combining these facts, we have
\begin{align*}
&m( \z ) \notag \\
&\propto \int_{(0, \infty )^n} {\prod_{i = 1}^{n} \pi ( \tau _i ) \over | \I _n + \bm{T} \bSi ^{- 1} \bm{T} |^{1 / 2}} \exp \Big\{ - {1 \over 2} \z ^{\top } \bm{T} ^{- 1} ( \I _n + \bm{T} ^{- 1} \bSi \bm{T} ^{- 1} )^{- 1} \bm{T} ^{- 1} \z \Big\} F( \btau ; \z ) d\btau . 
\end{align*}
Therefore, we have the following equality
\begin{align*}
	&{1 \over m( \z )} {\pd m( \z ) \over \pd \z } \notag \\
	&= \int_{(0, \infty )^n} \Big[ {\prod_{i = 1}^{n} \pi ( \tau _i ) \over | \I _n + \bm{T} \bSi ^{- 1} \bm{T} |^{1 / 2}} \exp \Big\{ - {1 \over 2} \z ^{\top } \bm{T}
	^{- 1} ( \I _n + \bm{T} ^{- 1} \bSi \bm{T} ^{- 1} )^{- 1} \bm{T} ^{- 1} \z \Big\} \\
	&\quad \times \Big\{ - \bm{T} ^{- 1} ( \I _n + \bm{T} ^{- 1} \bSi \bm{T} ^{- 1} )^{- 1} \bm{T} ^{- 1} \z F( \btau ; \z ) + {\pd F( \btau ; \z ) \over \pd \z } \Big\} \Big] d\btau \\
	&\quad / \int_{(0, \infty )^n} {\prod_{i = 1}^{n} \pi ( \tau _i ) \over | \I _n + \bm{T} \bSi ^{- 1} \bm{T} |^{1 / 2}} \exp \Big\{ - {1 \over 2} \z ^{\top } \bm{T} ^{- 1} ( \I _n + \bm{T} ^{- 1} \bSi \bm{T} ^{- 1} )^{- 1} \bm{T} ^{- 1} \z \Big\} F( \btau ; \z ) d\btau . 
\end{align*}

\noindent
\textbf{Step 3: Change of variables.}
Let $\D = \diag (| z_1 |, \dots , | z_n |)$.
We define a new variable $\bm{v} = (v_1, ..., v_n)^{\top}$ via $\btau = \D \v $, and put $\w = \D ^{- 1} \z $. 
We express the above display using new variables as
\begin{align}
	&{1 \over m( \z )} {\pd m( \z ) \over \pd \z }   \notag \\
	&= \int_{(0, \infty )^n} \Big[ {\prod_{i = 1}^{n} \pi (| z_i | v_i ) \over | \I _n + \D \V \bSi ^{- 1} \V \D |^{1 / 2}} \notag \\
	&\quad \times \exp \Big\{ - {1 \over 2} \w ^{\top } \V ^{- 1} ( \I _n + \D ^{- 1} \V ^{- 1} \bSi \V ^{- 1} \D ^{- 1} )^{- 1} \V^{- 1} \w \Big\} \notag \\
	&\quad \times \Big\{ - \D ^{- 1} \V ^{- 1} ( \I _n + \D ^{- 1} \V ^{- 1} \bSi \V ^{- 1} \D ^{- 1} )^{- 1} \V ^{- 1} \w F( \D \v ; \z ) + {\pd F \over \pd \z } ( \D \v ; \z ) \Big\} \Big] d\v \notag \\
	&\quad / \int_{(0, \infty )^n} \Big[ {\prod_{i = 1}^{n} \pi (| z_i | v_i ) \over | \I _n + \D \V \bSi ^{- 1} \V \D |^{1 / 2}} \notag \\
	&\quad \times \exp \Big\{ - {1 \over 2} \w ^{\top } \V ^{- 1} ( \I _n + \D ^{- 1} \V ^{- 1} \bSi \V ^{- 1} \D ^{- 1} )^{- 1} \V^{- 1} \w \Big\} \Big] F( \D \v ; \z ) d\v , 
	\label{basic_rep}
\end{align}
where $\V = \diag ( v_1 , \dots , v_n )$. 

\noindent
\textbf{Step 4: Calculation of limits.}
We choose an index $i_\ast \in \{2, ..., n\}$ and fix the values $z_j \in \mathbb{R}$ for $j \neq i_\ast$.
We calculate the limits of quantities that appear in  (\ref{basic_rep}) as $0 < z_{i_\ast} \to \infty$.

\noindent
\textbf{Step 4.1.}
First, we calculate the limits of ${\prod_{i = 1}^{n} \pi (| z_i | v_i ) / | \I _n + \D \V \bSi ^{- 1} \V \D |^{1 / 2}} $ and $\exp \{ - {1 \over 2} \w ^{\top } \V ^{- 1} ( \I _n + \D ^{- 1} \V ^{- 1} \bSi \V ^{- 1} \D ^{- 1} )^{- 1} \V^{- 1} \w \} $. Since
\begin{align*}
&\begin{vmatrix} \I _n & \D \V \\ \V \D & - \bSi \end{vmatrix} = (- 1)^n | \bSi | | \I _n - \D \V (- \bSi )^{- 1} \V \D| = | \I _n | | - \bSi - \V \D {\I _n}^{- 1} \D \V |,   
\end{align*}
we have the equality
$
| \I _n + \D \V \bSi ^{- 1} \V \D | = | \bSi + \V \D ^2 \V | / | \bSi |.   
$
Therefore, we have 
\begin{align*}
{\prod_{i = 1}^{n} \pi (| z_i | v_i ) \over | \I _n + \D \V \bSi ^{- 1} \V \D |^{1 / 2}} &\sim {\big\{ \prod_{i \neq i_{\ast }} \pi (| z_i | v_i ) \big\} \pi (| z_{i_{\ast }} |) / {v_{i_{\ast }}}^2 \over | \bSi + \V \D ^2 \V |^{1 / 2} / | \bSi |^{1 / 2}} \notag \\
&= {\big\{ \prod_{i \neq i_{\ast }} \pi (| z_i | v_i ) \big\} \pi (| z_{i_{\ast }} |) / {v_{i_{\ast }}}^2 \over | \D | | \D ^{- 1} \bSi \D ^{- 1} + \V ^2 |^{1 / 2} / | \bSi |^{1 / 2}}   \\
&\sim {\big\{ \prod_{i \neq i_{\ast }} \pi (| z_i | v_i ) \big\} \pi (| z_{i_{\ast }} |) / {v_{i_{\ast }}}^2 \over | \D | | \bSit + \V ^2 |^{1 / 2} / | \bSi |^{1 / 2}},    
\end{align*}
where we put $\bSit = \lim_{z_{i_\ast} \to \infty } \D ^{- 1} \bSi \D ^{- 1}$.
Furthermore, we have
\begin{align*}
&\exp \Big\{ - {1 \over 2} \w ^{\top } \V ^{- 1} ( \I _n + \D ^{- 1} \V ^{- 1} \bSi \V ^{- 1} \D ^{- 1} )^{- 1} \V^{- 1} \w \Big\} \notag \\
&= \exp \Big\{ - {1 \over 2} \w ^{\top } ( \V ^2 + \D ^{- 1} \bSi \D ^{- 1} )^{- 1} \w \Big\}   \\
&\sim \exp \Big\{ - {1 \over 2} \w ^{\top } ( \V ^2 + \bSit )^{- 1} \w \Big\} .   
\end{align*}

\noindent
\textbf{Step 4.2.}
Next, we calculate the limit of $F( \D \v ; \z )$.
Note that $F( \btau ; \z ) $ can be expressed as
\begin{align*}
&F( \btau ; \z ) = \int_{(0, \infty )^{n - 1}} {\rm{N}}_{n - 1} ( \bta _{- 1} | \E _2 ( \bSi ^{- 1} + \bm{T} ^{- 2} )^{- 1} \bSi ^{- 1} \z , \E _2 ( \bSi ^{- 1} + \bm{T} ^{- 2} )^{- 1} {\E _2}^{\top } ) d{\bta _{- 1}} \\
&= P( {\rm{N}}_{n - 1} ( \E _2 ( \bSi ^{- 1} + \bm{T} ^{- 2} )^{- 1} \bSi ^{- 1} \z , \E _2 ( \bSi ^{- 1} + \bm{T} ^{- 2} )^{- 1} {\E _2}^{\top } ) \in (0, \infty )^{n - 1} )   \\
&= P( {\rm{N}}_{n - 1} ( \bm{0} ^{(n - 1)} , \E _2 ( \bSi ^{- 1} + \bm{T} ^{- 2} )^{- 1} {\E _2}^{\top } ) \in (0, \infty )^{n - 1} - \E _2 ( \bSi ^{- 1} + \bm{T} ^{- 2} )^{- 1} \bSi ^{- 1} \z )   \\
&= P( {\rm{N}}_{n - 1} ( \bm{0} ^{(n - 1)} , \E _2 ( \bSi ^{- 1} + \bm{T} ^{- 2} )^{- 1} {\E _2}^{\top } ) \in \E _2 ( \bSi ^{- 1} + \bm{T} ^{- 2} )^{- 1} \bSi ^{- 1} \z - (0, \infty )^{n - 1} ).   
\end{align*}
In addition, since 
\begin{align*}
\E _2 ( \bSi ^{- 1} + \bm{T} ^{- 2} )^{- 1} \bSi ^{- 1} \z &= \E _2 ( \I _n + \bSi \bm{T} ^{- 2} )^{- 1} \z   \\
&= \E _2 \{ \I _n - ( \bm{T} ^{- 1} + \bm{T} \bSi ^{- 1} )^{- 1} \bm{T} ^{- 1} \} \z   \\
&= \E _2 \z - \E _2 ( \bm{T} ^{- 1} + \bm{T} \bSi ^{- 1} )^{- 1} \bm{T} ^{- 1} \z,   
\end{align*}
we have the equality
\begin{align*}
\E _2 ( \bSi ^{- 1} + \D ^{- 2} \V ^{- 2} )^{- 1} \bSi ^{- 1} \z &= \E _2 \z - \E _2 ( \D ^{- 1} \V ^{- 1} + \D \V \bSi ^{- 1} )^{- 1} \V ^{- 1} \w   \\
&= \E _2 \z - \E _2 ( \D ^{- 2} \V ^{- 1} + \V \bSi ^{- 1} )^{- 1} \V ^{- 1} \D ^{- 1} \w.   
\end{align*}
Therefore, we can represent $F( \D \v ; \z )$ as 
\begin{align}
&F( \D \v ; \z ) \notag \\
&= P( {\rm{N}}_{n - 1} ( \bm{0} ^{(n - 1)} , \E _2 ( \bSi ^{- 1} + \D ^{- 2} \V ^{- 2} )^{- 1} {\E _2}^{\top } ) \notag \\
&\quad \in \E _2 ( \bSi ^{- 1} + \D ^{- 2} \V ^{- 2} )^{- 1} \bSi ^{- 1} \z - (0, \infty )^{n - 1} )   \notag \\
&= P( {\rm{N}}_{n - 1} ( \bm{0} ^{(n - 1)} , \E _2 ( \bSi ^{- 1} + \D ^{- 2} \V ^{- 2} )^{- 1} {\E _2}^{\top } ) \notag \\
&\quad \in \E _2 \z - \E _2 ( \D ^{- 2} \V ^{- 1} + \V \bSi ^{- 1} )^{- 1} \V ^{- 1} \D ^{- 1} \w - (0, \infty )^{n - 1} )   \notag \\
&= {1 / (2 \pi )^{(n - 1) / 2} \over | \E _2 ( \bSi ^{- 1} + \D ^{- 2} \V ^{- 2} )^{- 1} {\E _2}^{\top } |^{1 / 2}}   \notag \\
&\quad \times \int_{\mathbb{R} ^{n - 1}} \Big( 1( \bta _{- 1} \in \E _2 \z - \E _2 ( \D ^{- 2} \V ^{- 1} + \V \bSi ^{- 1} )^{- 1} \V ^{- 1} \D ^{- 1} \w - (0, \infty )^{n - 1} )   \notag \\
&\quad \times \exp \Big[ - {1 \over 2} {\bta _{- 1}}^{\top } \{ \E _2 ( \bSi ^{- 1} + \D ^{- 2} \V ^{- 2} )^{- 1} {\E _2}^{\top } \} ^{- 1} \bta _{- 1} \Big] \Big) d{\bta _{- 1}} .   
\label{rep_F}
\end{align}
Since
\begin{align*}
\exp \Big[ - {1 \over 2} {\bta _{- 1}}^{\top } \{ \E _2 ( \bSi ^{- 1} + \D ^{- 2} \V ^{- 2} )^{- 1} {\E _2}^{\top } \} ^{- 1} \bta _{- 1} \Big] \le \exp \Big\{ - {1 \over 2} {\bta _{- 1}}^{\top } ( \E _2 \bSi {\E _2}^{\top } )^{- 1} \bta _{- 1} \Big\} , 
\end{align*}
we can apply the dominated convergence theorem in the right-hand side of (\ref{rep_F}). 
As a result, we obtain the following limit 
\begin{align*}
&F( \D \v ; \z )   \\
&\to {1 / (2 \pi )^{(n - 1) / 2} \over | \E _2 [ \bSi ^{- 1} + \{ \A ( \v _{- i_{\ast }} ) \} ^2 ]^{- 1} {\E _2}^{\top } |^{1 / 2}}   \\
&\quad \times \int_{\mathbb{R} ^{n - 1}} \Big[ 1( \bta _{- (1, i_{\ast } )} \in \z _{- (1, i_{\ast } )} - \E _{- (1, i_{\ast } )} ( \B ( \v _{- i_{\ast }} ) + \V \bSi ^{- 1} )^{- 1} \V ^{- 1} \wbt - (0, \infty )^{n - 2} )   \\
&\quad \times \exp \Big\{ - {1 \over 2} {\bta _{- 1}}^{\top } ( \E _2 [ \bSi ^{- 1} + \{ \A ( \v _{- i_{\ast }} ) \} ^2 ]^{- 1} {\E _2}^{\top } )^{- 1} \bta _{- 1} \Big\} \Big] d{\bta _{- 1}} \\
&= P( {\rm{N}}_{n - 2} ( \bm{0} ^{(n - 2)} , \E _{- (1, i_{\ast } )} [ \bSi ^{- 1} + \{ \A ( \v _{- i_{\ast }} ) \} ^2 ]^{- 1} {\E _{- (1, i_{\ast } )}}^{\top } ) \\
&\quad \in \z _{- (1, i_{\ast } )} - \E _{- (1, i_{\ast } )} \{ \B ( \v _{- i_{\ast }} ) + \V \bSi ^{- 1} \} ^{- 1} \V ^{- 1} \wbt - (0, \infty )^{n - 2} ),   
\end{align*}
where we put 
\begin{gather*}
\A ( \v _{- i_{\ast }} ) = \lim_{z_{i_\ast} \to \infty } \D ^{- 1} \V ^{- 1} \text{,} \quad \B ( \v _{- i_{\ast }} ) = \lim_{z_{i_\ast} \to \infty } \D ^{- 2} \V ^{- 1} \text{,} \quad \wbt = \lim_{z_{i_\ast} \to \infty } \D ^{- 1} \w 
\end{gather*}
and where $\bta _{- (1, i_{\ast } )} = ( \eta _i )_{i \in \{ 1, \dots , n \} \setminus \{ 1, i_{\ast } \} }$, $\z _{- (1, i_{\ast } )} = ( z_i )_{i \in \{ 1, \dots , n \} \setminus \{ 1, i_{\ast } \} }$, and $\E _{- (1, i_{\ast } )} = (( \e _{i}^{(n)} )_{i \in \{ 1, \dots , n \} \setminus \{ 1, i_{\ast } \} } )^{\top }$.

\noindent
\textbf{Step 4.3.}
Finally, we derive the limit of $\{ ( \pd F) / ( \pd \z ) \} ( \D \v ; \z )$ as $z_{i_\ast} \to \infty$. We put 
$\u = \E _2 ( \bSi ^{- 1} + \bm{T} ^{- 2} )^{- 1} \bSi ^{- 1} \z $ and rewrite $\{ ( \pd F) / ( \pd z_{i} ) \} ( \btau ; \z)$ as
\begin{align*}
	&{\pd F( \btau ; \z ) \over \pd z_{i}} \notag \\
	&= \sum_{k = 1}^{n - 1} {\pd u_k \over \pd z_{i}} {\pd \over \pd u_k} \int_{- \infty }^{u_k} \Big\{ \notag \\
	&\quad \int_{\u _{- k} - (0, \infty )^{n - 2}} {\rm{N}}_{n - 1} ( \bta _{- 1} | \bm{0} ^{(n - 1)} , \E _2 ( \bSi ^{- 1} + \bm{T} ^{- 2} )^{- 1} {\E _2}^{\top } ) d{\bta _{- (1, k + 1)}} \Big\} d{\eta_{k + 1}}   \\
	&= \sum_{k = 1}^{n - 1} \Big\{ ( \e _{k}^{(n - 1)} )^{\top } \E _2 ( \bSi ^{- 1} + \bm{T} ^{- 2} )^{- 1} \bSi ^{- 1} \e _{{i}}^{(n)} \\
	&\quad \times\int_{\u _{- k} - (0, \infty )^{n - 2}} {\rm{N}}_{n - 1} (( u_{k} , \bta _{- (1, k + 1)} ) | \bm{0} ^{(n - 1)} , \E _2 ( \bSi ^{- 1} + \bm{T} ^{- 2} )^{- 1} {\E _2}^{\top } ) d{\bta _{- (1, k + 1)}} \Big\} \\
	&= \sum_{k = 1}^{n - 1} \Big\{ ( \e _{k}^{(n - 1)} )^{\top } \E _2 ( \bSi ^{- 1} + \bm{T} ^{- 2} )^{- 1} \bSi ^{- 1} \e _{{i}}^{(n)} \notag \\
	&\quad \times {\rm{N}} ( u_k | 0, ( \e _{k}^{(n - 1)} )^{\top } \E _2 ( \bSi ^{- 1} + \bm{T} ^{- 2} )^{- 1} {\E _2}^{\top } \e _{k}^{(n - 1)} )   \\
	&\quad \times \int_{\u _{- k} - (0, \infty )^{n - 2}} {{\rm{N}}_{n - 1} (( u_{k} , \bta _{- (1, k + 1)} ) | \bm{0} ^{(n - 1)} , \E _2 ( \bSi ^{- 1} + \bm{T} ^{- 2} )^{- 1} {\E _2}^{\top } ) \over {\rm{N}} ( u_k | 0, ( \e _{k}^{(n - 1)} )^{\top } \E _2 ( \bSi ^{- 1} + \bm{T} ^{- 2} )^{- 1} {\E _2}^{\top } \e _{k}^{(n - 1)} )} d{\bta _{- (1, k + 1)}} \Big\},   
\end{align*}
where $\bta _{- (1, k + 1)}$ denotes the vector obtained by removing the first and $k+1$ th coordinates of $\bm{\eta}$ and where $( u_{k} , \bta _{- (1, k + 1)} )$ denotes $( \eta _2 , \dots , \eta _k , u_k , \eta _{k + 2} , \dots , \eta _n )^{\top } \in \mathbb{R} ^{n - 1}$. 
From this expression, we have
\begin{align*}
	&{\pd F \over \pd \z } ( \D \v ; \z )   \\
	&= \Big( \sum_{k = 1}^{n - 1} \Big\{ ( \e _{k}^{(n - 1)} )^{\top } \E _2 ( \bSi ^{- 1} + \bm{T} ^{- 2} )^{- 1} \bSi ^{- 1} \e _{i}^{(n)} \notag \\
	&\quad \times {\rm{N}} ( u_k | 0, ( \e _{k}^{(n - 1)} )^{\top } \E _2 ( \bSi ^{- 1} + \bm{T} ^{- 2} )^{- 1} {\E _2}^{\top } \e _{k}^{(n - 1)} )   \\
	&\quad \times \int_{\u _{- k} - (0, \infty )^{n - 2}} {{\rm{N}}_{n - 1} (( u_{k} , \bta _{- (1, k + 1)} ) | \bm{0} ^{(n - 1)} , \E _2 ( \bSi ^{- 1} + \bm{T} ^{- 2} )^{- 1} {\E _2}^{\top } ) \over {\rm{N}} ( u_k | 0, ( \e _{k}^{(n - 1)} )^{\top } \E _2 ( \bSi ^{- 1} + \bm{T} ^{- 2} )^{- 1} {\E _2}^{\top } \e _{k}^{(n - 1)} )} d{\bta _{- (1, k + 1)}} \notag \\
	&\quad \Big\} \Big) _{i = 1}^{n} \Big| _{\btau = \D \v }   \\
	&= \sum_{k = 1}^{n - 1} \Big\{ \bSi ^{- 1} ( \bSi ^{- 1} + \bm{T} ^{- 2} )^{- 1} {\E _2}^{\top } \e _{k}^{(n - 1)} \notag \\
	&\quad \times {\rm{N}} ( u_k | 0, ( \e _{k}^{(n - 1)} )^{\top } \E _2 ( \bSi ^{- 1} + \bm{T} ^{- 2} )^{- 1} {\E _2}^{\top } \e _{k}^{(n - 1)} )   \\
	&\quad \times \int_{\u _{- k} - (0, \infty )^{n - 2}} {{\rm{N}}_{n - 1} (( u_{k} , \bta _{- (1, k + 1)} ) | \bm{0} ^{(n - 1)} , \E _2 ( \bSi ^{- 1} + \bm{T} ^{- 2} )^{- 1} {\E _2}^{\top } ) \over {\rm{N}} ( u_k | 0, ( \e _{k}^{(n - 1)} )^{\top } \E _2 ( \bSi ^{- 1} + \bm{T} ^{- 2} )^{- 1} {\E _2}^{\top } \e _{k}^{(n - 1)} )} d{\bta _{- (1, k + 1)}} \notag \\
	&\quad \Big\} \Big| _{\btau = \D \v }   \\
	&= \sum_{k = 1}^{n - 1} \Big[ \bSi ^{- 1} ( \bSi ^{- 1} + \D ^{- 2} \V ^{- 2} )^{- 1} {\E _2}^{\top } \e _{k}^{(n - 1)} \notag \\
	&\quad \times {\rm{N}} (( u_k | _{\btau = \D \v } ) | 0, ( \e _{k}^{(n - 1)} )^{\top } \E _2 ( \bSi ^{- 1} + \D ^{- 2} \V ^{- 2} )^{- 1} {\E _2}^{\top } \e _{k}^{(n - 1)} )   \\
	&\quad \times \int_{( \u _{- k} | _{\btau = \D \v } ) - (0, \infty )^{n - 2}} \Big\{ {{\rm{N}}_{n - 1} (( u_{k} | _{\btau = \D \v } , \bta _{- (1, k + 1)} ) | \bm{0} ^{(n - 1)} , \E _2 ( \bSi ^{- 1} + \D ^{- 2} \V ^{- 2} )^{- 1} {\E _2}^{\top } ) \over {\rm{N}} (( u_k | _{\btau = \D \v } ) | 0, ( \e _{k}^{(n - 1)} )^{\top } \E _2 ( \bSi ^{- 1} + \D ^{- 2} \V ^{- 2} )^{- 1} {\E _2}^{\top } \e _{k}^{(n - 1)} )} \notag \\
	&\quad \Big\} d{\bta _{- (1, k + 1)}} \Big] \\
	&\equiv \sum_{k = 1}^{n - 1} \H _k.   
\end{align*}
Here, for all $k=1, ..., n-1$, we have
\begin{align*}
	&\| \H _k \| \notag \\
	&\sim \| \bSi ^{- 1} [ \bSi ^{- 1} + \{ \A ( \v _{- i_{\ast }} ) \} ^2 ]^{- 1} {\E _2}^{\top } \e _{k}^{(n - 1)} \|   \\
	&\quad \times {\rm{N}} (( \e _{k + 1}^{(n)} )^{\top } [ \bSi ^{- 1} + \{ \A ( \v _{- i_{\ast }} ) \} ^2 ]^{- 1} \bSi ^{- 1} \z | 0, ( \e _{k}^{(n - 1)} )^{\top } \E _2 [ \bSi ^{- 1} + \{ \A ( \v _{- i_{\ast }} ) \} ^2 ]^{- 1} {\E _2}^{\top } \e _{k}^{(n - 1)} )   \\
	&\quad \times \int_{( \u _{- k} | _{\btau = \D \v } ) - (0, \infty )^{n - 2}} \Big\{ {{\rm{N}}_{n - 1} (( u_{k} | _{\btau = \D \v } , \bta _{- (1, k + 1)} ) | \bm{0} ^{(n - 1)} , \E _2 ( \bSi ^{- 1} + \D ^{- 2} \V ^{- 2} )^{- 1} {\E _2}^{\top } ) \over {\rm{N}} (( u_k | _{\btau = \D \v } ) | 0, ( \e _{k}^{(n - 1)} )^{\top } \E _2 ( \bSi ^{- 1} + \D ^{- 2} \V ^{- 2} )^{- 1} {\E _2}^{\top } \e _{k}^{(n - 1)} )} \notag \\
	&\quad \Big\} d{\bta _{- (1, k + 1)}}   \\
	&\le \| \bSi ^{- 1} [ \bSi ^{- 1} + \{ \A ( \v _{- i_{\ast }} ) \} ^2 ]^{- 1} {\E _2}^{\top } \e _{k}^{(n - 1)} \|   \\
	&\quad \times {\rm{N}} (( \e _{k + 1}^{(n)} )^{\top } [ \bSi ^{- 1} + \{ \A ( \v _{- i_{\ast }} ) \} ^2 ]^{- 1} \bSi ^{- 1} \z | 0, ( \e _{k}^{(n - 1)} )^{\top } \E _2 [ \bSi ^{- 1} + \{ \A ( \v _{- i_{\ast }} ) \} ^2 ]^{- 1} {\E _2}^{\top } \e _{k}^{(n - 1)} )   
\end{align*}
and
$|( \e _{k + 1}^{(n)} )^{\top } [ \bSi ^{- 1} + \{ \A ( \v _{- i_{\ast }} ) \} ^2 ]^{- 1} \bSi ^{- 1} \z | \to \infty$ as $z_{i_\ast} \to \infty$. 
Therefore, we have
\begin{align*}
&{\pd F \over \pd \z } ( \D \v ; \z ) \to \bm{0}.   
\end{align*}

\noindent
\textbf{Step 5: Existence of dominance integrable functions.}
Until the previous step, we have shown $E[\bm{\eta} | \bm{z}] - \z$ is equal to the right-hand side of (\ref{basic_rep}) and calculated the limits of quantities which appear in (\ref{basic_rep}). 
In this step, we show the existence of dominance integrable functions in order to apply the dominated convergence theorem in (\ref{basic_rep}).
First, we rewrite $\{ 1 / m( \z ) \} \{ \pd / ( \pd \z ) \} m( \z )$ as
\begin{align*}
	&{1 \over m( \z )} {\pd m( \z ) \over \pd \z }   \\
	&= \int_{(0, \infty )^n} \Big[ {\prod_{i = 1}^{n} \pi (| z_i | v_i ) \over | \I _n + \D \V \bSi ^{- 1} \V \D |^{1 / 2}} \notag \\
	&\quad \times \exp \Big\{ - {1 \over 2} \w ^{\top } \V ^{- 1} ( \I _n + \D ^{- 1} \V ^{- 1} \bSi \V ^{- 1} \D ^{- 1} )^{- 1} \V^{- 1} \w \Big\} \\
	&\quad \times \Big\{ - \D ^{- 1} \V ^{- 1} ( \I _n + \D ^{- 1} \V ^{- 1} \bSi \V ^{- 1} \D ^{- 1} )^{- 1} \V ^{- 1} \w F( \D \v ; \z ) + {\pd F \over \pd \z } ( \D \v ; \z ) \Big\} \Big] d\v \\
	&\quad / \int_{(0, \infty )^n} \Big[ {\prod_{i = 1}^{n} \pi (| z_i | v_i ) \over | \I _n + \D \V \bSi ^{- 1} \V \D |^{1 / 2}} \notag \\
	&\quad \times \exp \Big\{ - {1 \over 2} \w ^{\top } \V ^{- 1} ( \I _n + \D ^{- 1} \V ^{- 1} \bSi \V ^{- 1} \D ^{- 1} )^{- 1} \V^{- 1} \w \Big\} \notag \\
	&\quad \times F( \D \v ; \z ) \Big] d\v \\
	&= \int_{(0, \infty )^n} \Big[ {\prod_{i = 1}^{n} \pi (| z_i | v_i ) \over | \bSi + \V \D ^2 \V |^{1 / 2}} \notag \\
	&\quad \times \exp \Big\{ - {1 \over 2} \w ^{\top } \V ^{- 1} ( \I _n + \D ^{- 1} \V ^{- 1} \bSi \V ^{- 1} \D ^{- 1} )^{- 1} \V^{- 1} \w \Big\} \\
	&\quad \times \Big\{ - \D ^{- 1} \V ^{- 1} ( \I _n + \D ^{- 1} \V ^{- 1} \bSi \V ^{- 1} \D ^{- 1} )^{- 1} \V ^{- 1} \w F( \D \v ; \w ) + {\pd F \over \pd \z } ( \D \v ; \z ) \Big\} \Big] d\v \\
	&\quad / \int_{(0, \infty )^n} \Big[ {\prod_{i = 1}^{n} \pi (| z_i | v_i ) \over | \bSi + \V \D ^2 \V |^{1 / 2}} \notag \\
	&\quad \times \exp \Big\{ - {1 \over 2} \w ^{\top } \V ^{- 1} ( \I _n + \D ^{- 1} \V ^{- 1} \bSi \V ^{- 1} \D ^{- 1} )^{- 1} \V^{- 1} \w \Big\} \notag \\
	&\quad \times F( \D \v ; \z ) \Big] d\v \\
	&= \int_{(0, \infty )^n} \Big[ {\prod_{i = 1}^{n} \{ \pi (| z_i | v_i ) / \pi (| z_i |) \} \over | \D ^{- 1} \bSi \D ^{- 1} + \V ^2 |^{1 / 2}} \notag \\
	&\quad \times \exp \Big\{ - {1 \over 2} \w ^{\top } \V ^{- 1} ( \I _n + \D ^{- 1} \V ^{- 1} \bSi \V ^{- 1} \D ^{- 1} )^{- 1} \V^{- 1} \w \Big\} \\
	&\quad \times \Big\{ - \D ^{- 1} \V ^{- 1} ( \I _n + \D ^{- 1} \V ^{- 1} \bSi \V ^{- 1} \D ^{- 1} )^{- 1} \V ^{- 1} \w F( \D \v ; \z ) + {\pd F \over \pd \z } ( \D \v ; \z ) \Big\} \Big] d\v \\
	&\quad / \int_{(0, \infty )^n} \Big[ {\prod_{i = 1}^{n} [ \pi (| z_i | v_i ) / \pi (| z_i |)] \over | \D ^{- 1} \bSi \D ^{- 1} + \V ^2 |^{1 / 2}} \notag \\
	&\quad \times \exp \Big\{ - {1 \over 2} \w ^{\top } \V ^{- 1} ( \I _n + \D ^{- 1} \V ^{- 1} \bSi \V ^{- 1} \D ^{- 1} )^{- 1} \V^{- 1} \w \Big\} \notag \\
	&\quad \times F( \D \v ; \z ) \Big] d\v \\
	&\equiv \int_{(0, \infty )^n} \Big[ h( \v ; z_{i_\ast} ) \notag \\
	&\quad \times \Big\{ - \D ^{- 1} \V ^{- 1} ( \I _n + \D ^{- 1} \V ^{- 1} \bSi \V ^{- 1} \D ^{- 1} )^{- 1} \V ^{- 1} \w F( \D \v ; \z ) + {\pd F \over \pd \z } ( \D \v ; \z ) \Big\} \Big] d\v \\
	&\quad / \int_{(0, \infty )^n} h( \v ; z_{i_\ast} ) F( \D \v ; \z ) d\v , 
\end{align*}
where we define
\[
h( \v ; z_{i_\ast} )
=
{\prod_{i = 1}^{n} \{ \pi (| z_i | v_i ) / \pi (| z_i |) \} \over | \D ^{- 1} \bSi \D ^{- 1} + \V ^2 |^{1 / 2}} \exp \Big\{ - {1 \over 2} \w ^{\top } \V ^{- 1} ( \I _n + \D ^{- 1} \V ^{- 1} \bSi \V ^{- 1} \D ^{- 1} )^{- 1} \V^{- 1} \w \Big\} .
\]
We will show that the functions 
$
h( \v ; z_{i_\ast} ) F( \D \v ; \z )$ and 
\[
h( \v ; z_{i_\ast} ) \Big\{ - \D ^{- 1} \V ^{- 1} ( \I _n + \D ^{- 1} \V ^{- 1} \bSi \V ^{- 1} \D ^{- 1} )^{- 1} \V ^{- 1} \w F( \D \v ; \z ) + {\pd F \over \pd \z } ( \D \v ; \z ) \Big\} 
\] 
are dominated by integrable functions on $(0, \infty)^n$ which do not depend on $z_{i_\ast}$.
First, we show this for the function $ h( \v ; z_{i_\ast} ) F( \D \v ; \z )$. Let $\varepsilon , M > 0$ such that
$\ep \I _n \le \bSi \le M \I _n$.
Observe that
\begin{align*}
{1 \over | \D ^{- 1} \bSi \D ^{- 1} + \V ^2 |^{1 / 2}} &\le {1 \over | \ep \D ^{- 2} + \V ^2 |^{1 / 2}} = \prod_{i = 1}^{n} {1 \over ( \ep / {z_i}^2 + {v_i}^2 )^{1 / 2}} = \prod_{i = 1}^{n} \Big( {{z_i}^2 \over \ep + {z_i}^2 {v_i}^2} \Big) ^{1 / 2},   
\end{align*}
\begin{align*}
\prod_{i = 1}^{n} {\pi (| z_i | v_i ) \over \pi (| z_i |)} &= \prod_{i = 1}^{n} {1 + {z_i}^2 \over 1 + {z_i}^2 {v_i}^2} \le \Big( \prod_{i \neq i_{\ast }} {1 + {z_i}^2 \over 1 + {z_i}^2 {v_i}^2} \Big) \times 2 {{z_{i_{\ast }}}^2 \over 1 + {z_{i_{\ast }}}^2 {v_{i_{\ast }}}^2} , 
\end{align*}
and
\begin{align*}
&\exp \Big\{ - {1 \over 2} \w ^{\top } \V ^{- 1} ( \I _n + \D ^{- 1} \V ^{- 1} \bSi \V ^{- 1} \D ^{- 1} )^{- 1} \V^{- 1} \w \Big\} \\
&\le \exp \Big\{ - {1 \over 2} \w ^{\top } \V ^{- 1} ( \I _n + M \D ^{- 2} \V ^{- 2} )^{- 1} \V^{- 1} \w \Big\} \\
&= \exp \Big\{ - {1 \over 2} \w ^{\top } ( \V ^2  + M \D ^{- 2} )^{- 1} \w \Big\} \notag \\
&= \prod_{i = 1}^{n} \exp \Big( - {1 \over 2} {1 \over {v_i}^2 + M / {z_i}^2} \Big) = \prod_{i = 1}^{n} \exp \Big( - {1 \over 2} {{z_i}^2 \over M + {z_i}^2 {v_i}^2} \Big).   
\end{align*}
From these facts, we obtain the following bound
\begin{align*}
&h( \v ; z_{i_\ast} ) \notag \\
&\le \Big\{ \prod_{i = 1}^{n} \Big( {{z_i}^2 \over \ep + {z_i}^2 {v_i}^2} \Big) ^{1 / 2} \Big\} \Big( \prod_{i \neq i_{\ast }} {1 + {z_i}^2 \over 1 + {z_i}^2 {v_i}^2} \Big) \Big( 2 {{z_{i_{\ast }}}^2 \over 1 + {z_{i_{\ast }}}^2 {v_{i_{\ast }}}^2} \Big) \prod_{i = 1}^{n} \exp \Big( - {1 \over 2} {{z_i}^2 \over M + {z_i}^2 {v_i}^2} \Big)   \\
&= \Big[ \prod_{i \neq i_{\ast }} \Big\{ \Big( {{z_i}^2 \over \ep + {z_i}^2 {v_i}^2} \Big) ^{1 / 2} {1 + {z_i}^2 \over 1 + {z_i}^2 {v_i}^2} \exp \Big( - {1 \over 2} {{z_i}^2 \over M + {z_i}^2 {v_i}^2} \Big) \Big\} \Big] \\
&\quad \times \Big( {{z_{i_{\ast }}}^2 \over \ep + {z_{i_{\ast }}}^2 {v_{i_{\ast }}}^2} \Big) ^{1 / 2} \Big( 2 {{z_{i_{\ast }}}^2 \over 1 + {z_{i_{\ast }}}^2 {v_{i_{\ast }}}^2} \Big) \exp \Big( - {1 \over 2} {{z_{i_{\ast }}}^2 \over M + {z_{i_{\ast }}}^2 {v_{i_{\ast }}}^2} \Big) .   
\end{align*}
Here, we have the following bound under $v_{i_{\ast }} > 1$
\begin{align*}
&\Big( {{z_{i_{\ast }}}^2 \over \ep + {z_{i_{\ast }}}^2 {v_{i_{\ast }}}^2} \Big) ^{1 / 2} \Big( 2 {{z_{i_{\ast }}}^2 \over 1 + {z_{i_{\ast }}}^2 {v_{i_{\ast }}}^2} \Big) \exp \Big( - {1 \over 2} {{z_{i_{\ast }}}^2 \over M + {z_{i_{\ast }}}^2 {v_{i_{\ast }}}^2} \Big) \le \Big( {1 \over {v_{i_{\ast }}}^2} \Big) ^{1 / 2} \Big( 2 {1 \over {v_{i_{\ast }}}^2} \Big),   
\end{align*}
and the following bound under $v_{i_{\ast }} \le 1$
\begin{align*}
&\Big( {{z_{i_{\ast }}}^2 \over \ep + {z_{i_{\ast }}}^2 {v_{i_{\ast }}}^2} \Big) ^{1 / 2} \Big( 2 {{z_{i_{\ast }}}^2 \over 1 + {z_{i_{\ast }}}^2 {v_{i_{\ast }}}^2} \Big) \exp \Big( - {1 \over 2} {{z_{i_{\ast }}}^2 \over M + {z_{i_{\ast }}}^2 {v_{i_{\ast }}}^2} \Big)   \\
&\le \Big( {M \over \ep } {{z_{i_{\ast }}}^2 \over M + {z_{i_{\ast }}}^2 {v_{i_{\ast }}}^2} \Big) ^{1 / 2} \Big( 2 M {{z_{i_{\ast }}}^2 \over M + {z_{i_{\ast }}}^2 {v_{i_{\ast }}}^2} \Big) \exp \Big( - {1 \over 2} {{z_{i_{\ast }}}^2 \over M + {z_{i_{\ast }}}^2 {v_{i_{\ast }}}^2} \Big)   \\
&\le \Big( {M \over \ep } \Big) ^{1 / 2} 2 M \Big\{ \sup_{u \in (0, \infty )} ( u^{3 / 2} e^{- u / 2} ) \Big\} \equiv M' < \infty.   
\end{align*}
Therefore, noting $F( \D \v ; \z ) \le 1$, the function 
$ h( \v ; z_{i_\ast} ) F( \D \v ; \z )$ is dominated as
\begin{align*}
h( \v ; z_{{i_\ast}} ) F( \D \v ; \z ) &\le \Big[ \prod_{i \neq i_{\ast }} \Big\{ \Big( {{z_i}^2 \over \ep + {z_i}^2 {v_i}^2} \Big) ^{1 / 2} {1 + {z_i}^2 \over 1 + {z_i}^2 {v_i}^2} \exp \Big( - {1 \over 2} {{z_i}^2 \over M + {z_i}^2 {v_i}^2} \Big) \Big\} \Big] \\
&\quad \times \Big\{ 1( v_{i_{\ast }} \le 1) M' + 1( v_{i_{\ast }} > 1) \Big( {1 \over {v_i}^2} \Big) ^{1 / 2} \Big( 2 {1 \over {v_{i_{\ast }}}^2} \Big) \Big\} .   
\end{align*}
The right-hand side of the above display is integrable on $(0, \infty)^n$ and does not depend on $z_{i_\ast}$.

Next, we show the existence of an integrable function that dominates the function
$$ 
h( \v ; z_{i_\ast} ) \Big\{ - \D ^{- 1} \V ^{- 1} ( \I _n + \D ^{- 1} \V ^{- 1} \bSi \V ^{- 1} \D ^{- 1} )^{- 1} \V ^{- 1} \w F( \D \v ; \z ) + {\pd F \over \pd \z } ( \D \v ; \z ) \Big\} . $$
For a sufficiently large $M_2 > 0$, we have the following bound
\begin{align*}
	&\| - \D ^{- 1} \V ^{- 1} ( \I _n + \D ^{- 1} \V ^{- 1} \bSi \V ^{- 1} \D ^{- 1} )^{- 1} \V ^{- 1} \w F( \D \v ; \z ) \|   \\
	&\le \| \D ^{- 1} \| \| \V ^{- 1} ( \I _n + \D ^{- 1} \V ^{- 1} \bSi \V ^{- 1} \D ^{- 1} )^{- 1} \V ^{- 1} \| \| \w \|   \\
	&= \sqrt{\sum_{i = 1}^{n} {1 \over {z_i}^2}} \| ( \V ^2 + \D ^{- 1} \bSi \D ^{- 1} )^{- 1} \| \sqrt{n}   \\
	&\le \sqrt{\sum_{i = 1}^{n} {n \over {z_i}^2}} \sum_{i = 1}^{n} \sum_{j = 1}^{n} | ( \e _{i}^{(n)} )^{\top } ( \V ^2 + \D ^{- 1} \bSi \D ^{- 1} )^{- 1} \e _{j}^{(n)} |   \\
	&\le \sqrt{\sum_{i = 1}^{n} {n \over {z_i}^2}} \sum_{i = 1}^{n} \sum_{j = 1}^{n} {( \e _{i}^{(n)} )^{\top } ( \V ^2 + \D ^{- 1} \bSi \D ^{- 1} )^{- 1} \e _{i}^{(n)} + ( \e _{j}^{(n)} )^{\top } ( \V ^2 + \D ^{- 1} \bSi \D ^{- 1} )^{- 1} \e _{j}^{(n)} \over 2}   \\
	&\le \sqrt{\sum_{i = 1}^{n} {n \over {z_i}^2}} \sum_{i = 1}^{n} \sum_{j = 1}^{n} {( \e _{i}^{(n)} )^{\top } ( \V ^2 + \ep \D ^{- 2} )^{- 1} \e _{i}^{(n)} + ( \e _{j}^{(n)} )^{\top } ( \V ^2 + \ep \D ^{- 2} )^{- 1} \e _{j}^{(n)} \over 2}   \\
	&= \sqrt{\sum_{i = 1}^{n} {n \over {z_i}^2}} {1 \over 2} \sum_{i = 1}^{n} \sum_{j = 1}^{n} \Big( {{z_i}^2 \over \ep + {z_i}^2 {v_i}^2} + {{z_j}^2 \over \ep + {z_j}^2 {v_j}^2} \Big) = \sqrt{\sum_{i = 1}^{n} {n \over {z_i}^2}} n \sum_{i = 1}^{n} {{z_i}^2 \over \ep + {z_i}^2 {v_i}^2} \notag \\
	&\le M_2 \sum_{i = 1}^{n} {{z_i}^2 \over \ep + {z_i}^2 {v_i}^2}.   
\end{align*}
Therefore, we obtain
\begin{align*}
	&\| - \D ^{- 1} \V ^{- 1} ( \I _n + \D ^{- 1} \V ^{- 1} \bSi \V ^{- 1} \D ^{- 1} )^{- 1} \V ^{- 1} \w F( \D \v ; \z ) h( \v ; z_{i_\ast} ) \|   \\
	&\le M_2 \Big( \sum_{i' \neq i_{\ast }} {{z_{i'}}^2 \over \ep + {z_{i'}}^2 {v_{i'}}^2} + {{z_{i_{\ast }}}^2 \over \ep + {z_{i_{\ast }}}^2 {v_{i_{\ast }}}^2} \Big) \notag \\
	&\quad \times \Big[ \prod_{i \neq i_{\ast }} \Big\{ \Big( {{z_i}^2 \over \ep + {z_i}^2 {v_i}^2} \Big) ^{1 / 2} {1 + {z_i}^2 \over 1 + {z_i}^2 {v_i}^2} \exp \Big( - {1 \over 2} {{z_i}^2 \over M + {z_i}^2 {v_i}^2} \Big) \Big\} \Big] \\
	&\quad \times \Big( {{z_{i_{\ast }}}^2 \over \ep + {z_{i_{\ast }}}^2 {v_{i_{\ast }}}^2} \Big) ^{1 / 2} \Big( 2 {{z_{i_{\ast }}}^2 \over 1 + {z_{i_{\ast }}}^2 {v_{i_{\ast }}}^2} \Big) \exp \Big( - {1 \over 2} {{z_{i_{\ast }}}^2 \over M + {z_{i_{\ast }}}^2 {v_{i_{\ast }}}^2} \Big) \\
	&= M_2 \sum_{i' \neq i_{\ast }} \Big[ \prod_{i \neq i_{\ast }} \Big\{ \Big( {{z_i}^2 \over \ep + {z_i}^2 {v_i}^2} \Big) ^{1 / 2 + 1(i = i' )} {1 + {z_i}^2 \over 1 + {z_i}^2 {v_i}^2} \exp \Big( - {1 \over 2} {{z_i}^2 \over M + {z_i}^2 {v_i}^2} \Big) \Big\} \Big] \\
	&\quad \times \Big( {{z_{i_{\ast }}}^2 \over \ep + {z_{i_{\ast }}}^2 {v_{i_{\ast }}}^2} \Big) ^{1 / 2} \Big( 2 {{z_{i_{\ast }}}^2 \over 1 + {z_{i_{\ast }}}^2 {v_{i_{\ast }}}^2} \Big) \exp \Big( - {1 \over 2} {{z_{i_{\ast }}}^2 \over M + {z_{i_{\ast }}}^2 {v_{i_{\ast }}}^2} \Big) \\
	&\quad + M_2 \Big[ \prod_{i \neq i_{\ast }} \Big\{ \Big( {{z_i}^2 \over \ep + {z_i}^2 {v_i}^2} \Big) ^{1 / 2} {1 + {z_i}^2 \over 1 + {z_i}^2 {v_i}^2} \exp \Big( - {1 \over 2} {{z_i}^2 \over M + {z_i}^2 {v_i}^2} \Big) \Big\} \Big] \\
	&\quad \times \Big( {{z_{i_{\ast }}}^2 \over \ep + {z_{i_{\ast }}}^2 {v_{i_{\ast }}}^2} \Big) ^{1 / 2 + 1} \Big( 2 {{z_{i_{\ast }}}^2 \over 1 + {z_{i_{\ast }}}^2 {v_{i_{\ast }}}^2} \Big) \exp \Big( - {1 \over 2} {{z_{i_{\ast }}}^2 \over M + {z_{i_{\ast }}}^2 {v_{i_{\ast }}}^2} \Big) \\
	&\le M_2 \sum_{i' \neq i_{\ast }} \Big[ \prod_{i \neq i_{\ast }} \Big\{ \Big( {{z_i}^2 \over \ep + {z_i}^2 {v_i}^2} \Big) ^{1 / 2 + 1(i = i' )} {1 + {z_i}^2 \over 1 + {z_i}^2 {v_i}^2} \exp \Big( - {1 \over 2} {{z_i}^2 \over M + {z_i}^2 {v_i}^2} \Big) \Big\} \Big] \\
	&\quad \times \Big\{ 1( v_{i_{\ast }} \le 1) M' + 1( v_{i_{\ast }} > 1) \Big( {1 \over {v_{i_{\ast }}}^2} \Big) ^{1 / 2} \Big( 2 {1 \over {v_{i_{\ast }}}^2} \Big) \Big\}   \\
	&\quad + M_2 \Big[ \prod_{i \neq i_{\ast }} \Big\{ \Big( {{z_i}^2 \over \ep + {z_i}^2 {v_i}^2} \Big) ^{1 / 2} {1 + {z_i}^2 \over 1 + {z_i}^2 {v_i}^2} \exp \Big( - {1 \over 2} {{z_i}^2 \over M + {z_i}^2 {v_i}^2} \Big) \Big\} \Big] \\
	&\quad \times \Big\{ 1( v_{i_{\ast }} \le 1) M'' + 1( v_{i_{\ast }} > 1) \Big( {1 \over {v_{i_{\ast }}}^2} \Big) ^{1 / 2 + 1} \Big( 2 {1 \over {v_{i_{\ast }}}^2} \Big) \Big\} 
\end{align*}
for some $M'' > 0$, where the right hand side of the above display is an integrable function of $\bm{v}$ on $(0, \infty )^n$ which does not depend on $z_{i_\ast}$.
Furthermore, $\{ ( \pd F) / ( \pd \z ) \} ( \D \v ; \z )$ is bounded as
\begin{align*}
&\Big\| {\pd F \over \pd \z } ( \D \v ; \z ) \Big\| \le \sum_{k = 1}^{n - 1} \| \H _k \|   \\
&\le \sum_{k = 1}^{n - 1} \{ \| \bSi ^{- 1} ( \bSi ^{- 1} + \D ^{- 2} \V ^{- 2} )^{- 1} {\E _2}^{\top } \e _{k}^{(n - 1)} \| \notag \\
&\quad \times {\rm{N}} (( u_k | _{\bla = \D \v } ) | 0, ( \e _{k}^{(n - 1)} )^{\top } \E _2 ( \bSi ^{- 1} + \D ^{- 2} \V ^{- 2} )^{- 1} {\E _2}^{\top } \e _{k}^{(n - 1)} ) \} \\
&\le {1 \over \sqrt{2 \pi }} \sum_{k = 1}^{n - 1} {\| \bSi ^{- 1} ( \bSi ^{- 1} + \D ^{- 2} \V ^{- 2} )^{- 1} {\E _2}^{\top } \e _{k}^{(n - 1)} \| \over \sqrt{( \e _{k}^{(n - 1)} )^{\top } \E _2 ( \bSi ^{- 1} + \D ^{- 2} \V ^{- 2} )^{- 1} {\E _2}^{\top } \e _{k}^{(n - 1)}}}   \\
&= {1 \over \sqrt{2 \pi }} \sum_{k = 1}^{n - 1} \Big\{ {1 \over \sqrt{( \e _{k}^{(n - 1)} )^{\top } \E _2 ( \bSi ^{- 1} + \D ^{- 2} \V ^{- 2} )^{- 1} {\E _2}^{\top } \e _{k}^{(n - 1)}}} \notag \\
&\quad \times \sqrt{( \e _{k}^{(n - 1)} )^{\top } \E _2 ( \bSi ^{- 1} + \D ^{- 2} \V ^{- 2} )^{- 1} \bSi ^{- 1 / 2} \bSi ^{- 1} \bSi ^{- 1 / 2} ( \bSi ^{- 1} + \D ^{- 2} \V ^{- 2} )^{- 1} {\E _2}^{\top } \e _{k}^{(n - 1)}} \Big\} \notag \\
&\le {\sqrt{1 / \ep } \over \sqrt{2 \pi }} \sum_{k = 1}^{n - 1} {\sqrt{( \e _{k}^{(n - 1)} )^{\top } \E _2 ( \bSi ^{- 1} + \D ^{- 2} \V ^{- 2} )^{- 1} \bSi ^{- 1} ( \bSi ^{- 1} + \D ^{- 2} \V ^{- 2} )^{- 1} {\E _2}^{\top } \e _{k}^{(n - 1)}} \over \sqrt{( \e _{k}^{(n - 1)} )^{\top } \E _2 ( \bSi ^{- 1} + \D ^{- 2} \V ^{- 2} )^{- 1} {\E _2}^{\top } \e _{k}^{(n - 1)}}}   \\
&\le {\sqrt{1 / \ep } \over \sqrt{2 \pi }} \sum_{k = 1}^{n - 1} {\sqrt{( \e _{k}^{(n - 1)} )^{\top } \E _2 ( \bSi ^{- 1} + \D ^{- 2} \V ^{- 2} )^{- 1} {\E _2}^{\top } \e _{k}^{(n - 1)}} \over \sqrt{( \e _{k}^{(n - 1)} )^{\top } \E _2 ( \bSi ^{- 1} + \D ^{- 2} \V ^{- 2} )^{- 1} {\E _2}^{\top } \e _{k}^{(n - 1)}}} = {\sqrt{1 / \ep } \over \sqrt{2 \pi }} (n - 1).   
\end{align*}
Therefore, we obtained the desired result.

\noindent
\textbf{Step 6: Conclusion.}
By the dominated convergence theorem, we have
\begin{align*}
	&{1 \over m( \z )} {\pd m( \z ) \over \pd \z }   \\
	&\sim \int_{(0, \infty )^n} \Big\{ {\big\{ \prod_{i \neq i_{\ast }} \pi (| z_i | v_i ) \big\} \pi (| z_{i_{\ast }} |) / {v_{i_{\ast }}}^2 \over | \bSit + \V ^2 |^{1 / 2}} \exp \Big\{ - {1 \over 2} \w ^{\top } ( \V ^2 + \bSit )^{- 1} \w \Big\} \\
	&\quad \times \Big( - \A ( \v _{- i_{\ast }} ) \{ \I _n + \A ( \v _{- i_{\ast }} ) \bSi \A ( \v _{- i_{\ast }} ) \} ^{- 1} \V ^{- 1} \w   \\
	&\quad \times P( {\rm{N}}_{n - 2} ( \bm{0} ^{(n - 2)} , \E _{- (1, i_{\ast } )} [ \bSi ^{- 1} + \{ \A ( \v _{- i_{\ast }} ) \} ^2 ]^{- 1} {\E _{- (1, i_{\ast } )}}^{\top } ) \\
	&\quad \in \z _{- (1, i_{\ast } )} - \E _{- (1, i_{\ast } )} \{ \B ( \v _{- i_{\ast }} ) + \V \bSi ^{- 1} \} ^{- 1} \V ^{- 1} \wbt - (0, \infty )^{n - 2} ) \Big) \Big\} d\v \\
	&\quad / \int_{(0, \infty )^n} \Big( {\big\{ \prod_{i \neq i_{\ast }} \pi (| z_i | v_i ) \big\} \pi (| z_{i_{\ast }} |) / {v_{i_{\ast }}}^2 \over | \bSit + \V ^2 |^{1 / 2}} \exp \Big\{ - {1 \over 2} \w ^{\top } ( \V ^2 + \bSit )^{- 1} \w \Big\} \\
	&\quad \times P( {\rm{N}}_{n - 2} ( \bm{0} ^{(n - 2)} , \E _{- (1, i_{\ast } )} [ \bSi ^{- 1} + \{ \A ( \v _{- i_{\ast }} ) \} ^2 ]^{- 1} {\E _{- (1, i_{\ast } )}}^{\top } ) \\
	&\quad \in \z _{- (1, i_{\ast } )} - \E _{- (1, i_{\ast } )} \{ \B ( \v _{- i_{\ast }} ) + \V \bSi ^{- 1} \} ^{- 1} \V ^{- 1} \wbt - (0, \infty )^{n - 2} ) \Big) d\v \\
	&\equiv - \int_{(0, \infty )^n} \A ( \v _{- i_{\ast }} ) \{ \I _n + \A ( \v _{- i_{\ast }} ) \bSi \A ( \v _{- i_{\ast }} ) \} ^{- 1} \V ^{- 1} \w h( \v ) d\v / \int_{(0, \infty )^n} h( \v ) d\v . 
\end{align*}
Therefore, we have
\begin{align*}
&E[\bm{\eta} | \bm{z}] - \z \notag \\
&\sim - \bSi \int_{(0, \infty )^n} \A ( \v _{- i_{\ast }} ) \{ \I _n + \A ( \v _{- i_{\ast }} ) \bSi \A ( \v _{- i_{\ast }} ) \} ^{- 1} \V ^{- 1} \w h( \v ) d\v / \int_{(0, \infty )^n} h( \v ) d\v , 
\end{align*}
as $z_{i_\ast} \to \infty$. 
This implies $E[\bm{\eta} | \bm{z}] - \z$ is bounded, so the claim of the theorem follows.

\section*{A3 \,\ Proof of Theorem 
2}
In the proof of Theorem 
2, we use the following lemma, which gives a relationship between the Ces\'{a}ro-average risk and the prior $p$. For any measurable set $A \subset [0, \infty)$, we denote $p(A)$ as the prior measure.

\begin{lem}[\cite{clarke1990information}]
	For any $\varepsilon > 0$, let $A_\varepsilon = \{\eta \ge 0: \Delta_{KL}(\eta_0 || \eta) < \varepsilon \}$ denote the Kullback-Leibler information neighborhood of size $\varepsilon$, centered at  $\eta_0$. Then the following upper bound for $R_n$ holds for all $\varepsilon > 0$ and $n$: 
	\begin{equation}
	R_n \le 
	\varepsilon - \frac{1}{n} \log p(A_\varepsilon).
	\label{lem:bound}
	\end{equation}
	\label{risk_kl}
\end{lem}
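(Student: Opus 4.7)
The plan is to establish the bound via the Clarke--Barron chain rule for the Bayes predictive density, combined with a Jensen-type lower bound on the marginal density restricted to the KL-neighbourhood $A_\varepsilon$. The key observation is that the posterior mean estimator $\hat{f}_k(x)=\int f_\eta(x)\pi_k(d\eta|y_1,\dots,y_k)$ coincides with the conditional predictive density $m_{k+1}(y_1,\dots,y_k,x)/m_k(y_1,\dots,y_k)$ under the Bayes marginal $m_n(y^n)=\int\prod_{i=1}^n f_\eta(y_i)\,p(d\eta)$, so the log-likelihood ratio of $f_{\eta_0}^n$ against $m_n$ telescopes into a sum of predictive KL terms.

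First, I would apply the chain rule identity
\[
\log\frac{\prod_{i=1}^n f_{\eta_0}(y_i)}{m_n(y^n)} = \sum_{k=0}^{n-1}\log\frac{f_{\eta_0}(y_{k+1})}{\hat{f}_k(y_{k+1})},
\]
and take expectation under $Y^n\sim f_{\eta_0}^n$. The tower property turns the right-hand side into $\sum_{k=0}^{n-1} E[\Delta_{KL}(f_{\eta_0}||\hat{f}_k)]$, yielding
\[
\Delta_{KL}(f_{\eta_0}^n||m_n) = \sum_{k=0}^{n-1} E\bigl[\Delta_{KL}(f_{\eta_0}||\hat{f}_k)\bigr].
\]
To connect this with the Ces\'aro sum in $R_n$, which runs over $k=1,\dots,n$, I would use that $k\mapsto E[\Delta_{KL}(f_{\eta_0}||\hat{f}_k)]$ is non-increasing: the Bayes predictor based on $Y^{k+1}$ dominates any $\sigma(Y^k)$-measurable predictor under log-loss, in particular $\hat{f}_k$ itself, and marginalising preserves this inequality. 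Hence $nR_n \le \sum_{k=0}^{n-1} E[\Delta_{KL}(f_{\eta_0}||\hat{f}_k)] = \Delta_{KL}(f_{\eta_0}^n||m_n)$.

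Second, I would bound $\Delta_{KL}(f_{\eta_0}^n||m_n)$ by restricting the prior integral defining $m_n$ to $A_\varepsilon$. Since
\[
m_n(y^n) \ge p(A_\varepsilon)\int_{A_\varepsilon}\prod_{i=1}^n f_\eta(y_i)\,\frac{p(d\eta)}{p(A_\varepsilon)},
\]
Jensen's inequality applied to $-\log$ gives
\[
-\log m_n(y^n) \le -\log p(A_\varepsilon) - \int_{A_\varepsilon}\sum_{i=1}^n\log f_\eta(y_i)\,\frac{p(d\eta)}{p(A_\varepsilon)}.
\]
Taking expectation under $f_{\eta_0}^n$, exchanging via Fubini, and using $\Delta_{KL}(f_{\eta_0}||f_\eta)<\varepsilon$ for $\eta\in A_\varepsilon$, one obtains $\Delta_{KL}(f_{\eta_0}^n||m_n)\le n\varepsilon -\log p(A_\varepsilon)$. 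Chaining with the first step and dividing by $n$ delivers the lemma.

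The main obstacle is the off-by-one mismatch between the natural chain-rule sum ($k=0,\dots,n-1$) and the sum defining $R_n$ ($k=1,\dots,n$); bridging them cleanly requires the monotonicity of the expected Bayes predictive log-risk, which is the only non-trivial auxiliary ingredient. A secondary technical point is the Fubini exchange, which is justified because on $A_\varepsilon$ the expected log-likelihood $E_{\eta_0}[\log f_\eta(Y_1)]$ is uniformly bounded below by $E_{\eta_0}[\log f_{\eta_0}(Y_1)]-\varepsilon$, making the integrand dominated under $f_{\eta_0}\otimes p|_{A_\varepsilon}/p(A_\varepsilon)$.
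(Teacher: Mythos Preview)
The paper does not prove this lemma; it is quoted from Clarke and Barron (1990) and used as a black box in the proof of Theorem~2. So there is no proof in the paper to compare against, and the question is only whether your argument stands on its own.

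Your second step---restricting the integral defining $m_n$ to $A_\varepsilon$ and applying Jensen's inequality to obtain $\Delta_{KL}(f_{\eta_0}^n\|m_n)\le n\varepsilon-\log p(A_\varepsilon)$---is correct and is exactly the standard argument. The first step, however, has a genuine gap. The chain rule yields $\Delta_{KL}(f_{\eta_0}^n\|m_n)=\sum_{k=0}^{n-1}E_{\eta_0}[\Delta_{KL}(f_{\eta_0}\|\hat f_k)]$, and to pass to $nR_n=\sum_{k=1}^{n}E_{\eta_0}[\Delta_{KL}(f_{\eta_0}\|\hat f_k)]$ you invoke monotonicity of $k\mapsto E_{\eta_0}[\Delta_{KL}(f_{\eta_0}\|\hat f_k)]$. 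Your justification---that the Bayes predictor based on $Y^{k+1}$ dominates any $\sigma(Y^k)$-measurable predictor under log-loss---is true only for the \emph{Bayes} (prior-averaged) risk, not for the frequentist risk at a fixed $\eta_0$: the optimality of $\hat f_{k+1}$ is that it minimises $\int E_\eta[-\log g(Y_{k+2})]\,p(d\eta)$, and nothing forces this to hold at each $\eta_0$ separately. A concrete counterexample: put prior mass $0.499$ on each of two parameters $A,B$ with $\tfrac12 f_A+\tfrac12 f_B=f_{\eta_0}$ and mass $0.002$ on $\eta_0$ itself. Then $\Delta_{KL}(f_{\eta_0}\|\hat f_0)\approx 0$, but one observation unbalances the posterior on $\{A,B\}$ and gives $E_{\eta_0}[\Delta_{KL}(f_{\eta_0}\|\hat f_1)]>0$.

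The clean repair is not to argue monotonicity at all. In the original Clarke--Barron formulation the Ces\`aro risk is $n^{-1}\sum_{k=0}^{n-1}E_{\eta_0}[\Delta_{KL}(f_{\eta_0}\|\hat f_k)]$, for which the chain rule plus Jensen give the bound directly. The paper's indexing $k=1,\dots,n$ differs from this by the single term $n^{-1}\bigl(E_{\eta_0}[\Delta_{KL}(f_{\eta_0}\|\hat f_n)]-\Delta_{KL}(f_{\eta_0}\|\hat f_0)\bigr)=O(n^{-1})$, which is absorbed into the rates claimed in Theorem~2. You should therefore state the bound for the shifted sum and note that the index discrepancy is harmless at the level of rates, rather than rely on a pointwise monotonicity that fails in general.
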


\begin{proof}[
Proof of Theorem 2]
	First, we consider the case $\eta_0 = 0$. In this case, the Kullback-Leibler information neighborhood is 
	$A_\varepsilon = [0, \sqrt{2\varepsilon})$. Using the lower bounds for the horseshoe density (Theorem 1 in \cite{carvalho2010horseshoe}), we have the following lower bound for $p(A_\varepsilon)$:
	\begin{align*}
	p(A_\varepsilon)
	&\ge 
	\frac{1}{(2\pi^3)^{1/2}} \int_0^{\sqrt{2\varepsilon}} \log\left(1 + \frac{4}{\eta^2}\right) d\eta  \\
	&=
	\frac{1}{(2\pi^3)^{1/2}} \int_{2/\varepsilon}^{\infty} \frac{\log(1 + u)}{u^{3/2}} du  \\
	&\ge
	C_1\varepsilon^{1/2}\log \left(1 + \frac{1}{\varepsilon}\right),
	\end{align*}
	where $C_1$ is a constant. 
	Setting $\varepsilon = 1/n$ and applying Lemma \ref{risk_kl} gives the bound
	\[
	R_n 
	=
	O\left(\frac{1}{n} + \frac{1}{2n} \log n - \frac{1}{n} \log \log (1 + n)\right)
	=
	O[n^{-1}\{\log n - \log \log n \}].
	\]
	
	Next, we consider the case $\eta_0 > 0$.
	In this case, for all small $\varepsilon > 0$, the Kullback-Leibler information neighborhood is $A_\varepsilon = (\eta_0 - \sqrt{2\varepsilon}, \eta_0 + \sqrt{2\varepsilon})$. 
	Because the half-horseshoe density is bounded below by a constant $C_2$ near $\eta_0$, we have 
	\[
	p(A_\varepsilon)
	\ge 
	\int_{\eta_0 - \sqrt{2\varepsilon}}^{\eta_0 + \sqrt{2\varepsilon}} C_2d\eta
	=
	2\sqrt{2}C_2\sqrt{\varepsilon}.
	\]
	Setting
	$\varepsilon = 1/n$ and applying Lemma \ref{risk_kl} gives the bound
	\[
	R_n = 
	O\left(\frac{1}{n} + \frac{1}{2n}\log n \right)
	=
	O\{n^{-1}\log n\}.
	\]
\end{proof}

\section*{A4 \,\ Additional simulation results}
We here provide additional simulation results.
First, we reported example fits using the HL, HN and GP methods in Figure \ref{fig:example_fits_2}. We see that the HL and HN methods are less adaptive to abrupt increases than the HH method in Figure 2. We also see that the estimated values by the GP method are not always monotonic.

Next, focusing the comparison on the three methods based on shrinkage priors ( HH, HL and HN),  we reported boxplots of simulation results by the three methods in Figure \ref{fig:box_plot}. It is observed that in scenarios in (I) and (II), where the true function $f$ has a less frequent increase, the HH method tends to have the shortest credible intervals while the HN method has the longest. This result would show that the strong shrinkage effects of the half-horseshoe prior were realized for many zero signals. On the other hand, in scenarios (III) and (IV) where $f$ has more frequent increments, the HH method tends to have the longest credible intervals while the HN has the shortest. This would show that the heavy tail property of the half-horseshoe prior realized for many positive signals.

\begin{figure}[htbp]
	\centering
	\includegraphics[keepaspectratio, width=5.00in]{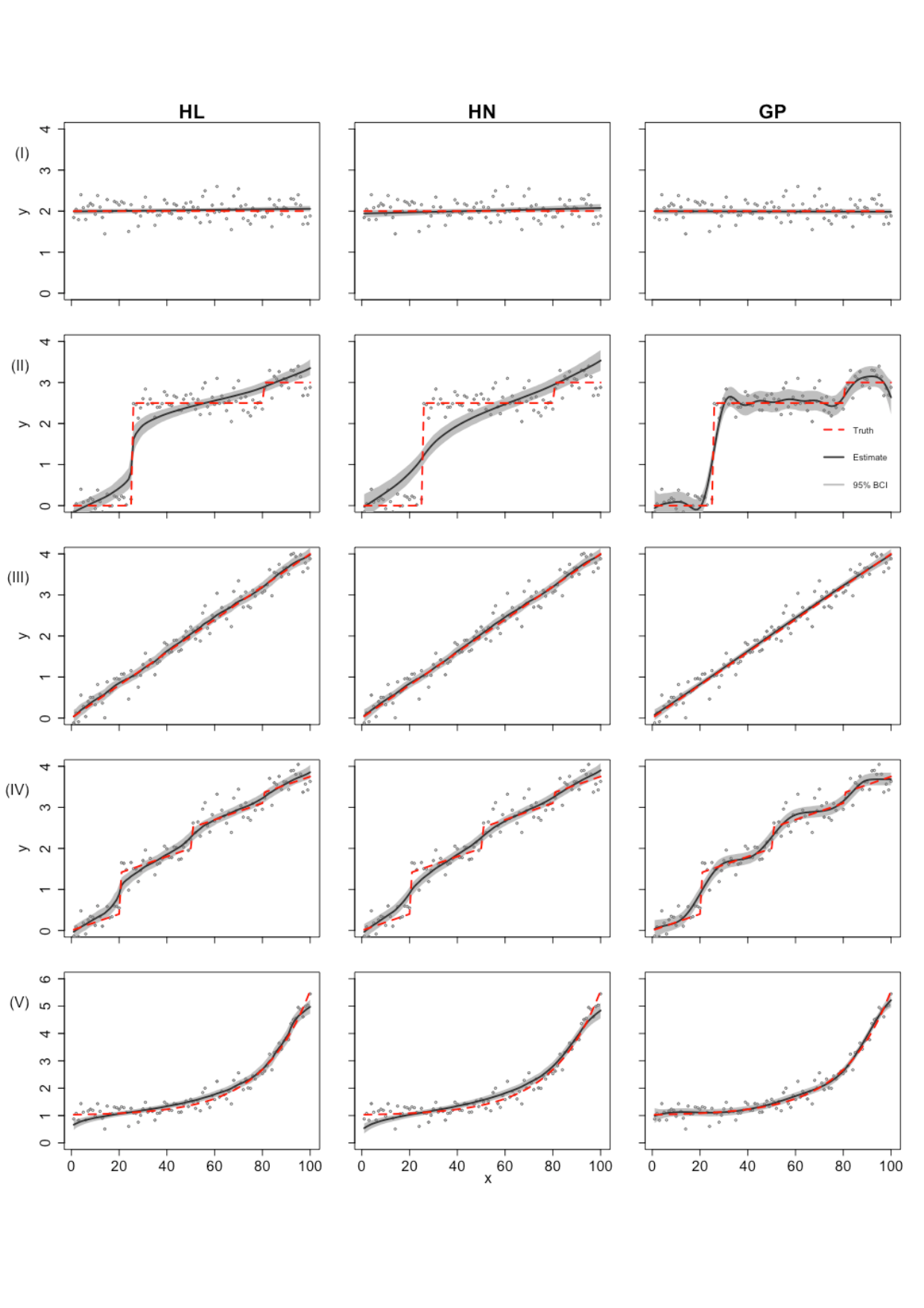}
	\caption{Example fits by the HL, HN and GP methods under the five scenarios. Plots show true functions (dashed red lines), posterior means(solid black lines), and associated $95 \%$ Bayesian credible interval (gray bands) for each $\theta_i$. Values between observed locations are interpolated for plotting.}
	\label{fig:example_fits_2}
\end{figure}

\begin{figure}[htbp]
	\centering
	\includegraphics[keepaspectratio, width=5.65in]{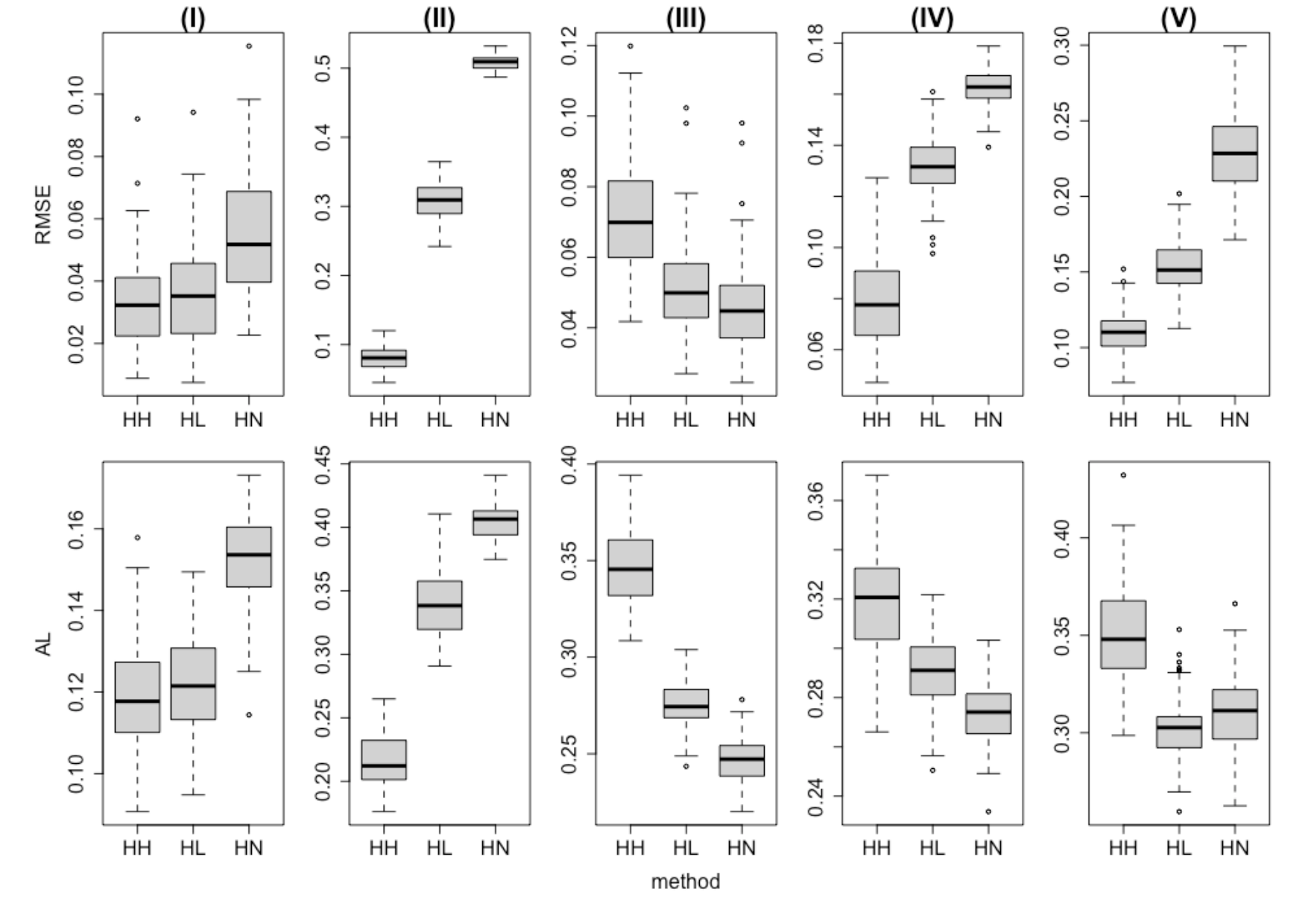}
	\caption{Boxplots of simulation results by the HH, HL and HN methods under five scenarios.  The first row shows root mean square errors (RMSEs) of point estimators and the second row shows average lengths (ALs) of $95\%$ credible intervals. }
	\label{fig:box_plot}
\end{figure}

Finally, we reported coverage probabilities (CPs) of credible intervals at different locations in Table \ref{tab:covarage}. In the five scenarios, we estimated the function values by the proposed HH method 100 times and calculated CPs of the $95\%$ credible intervals at $i \in \{5, 10, 15, 20, 25, 30, 35, 40, 45, 50\}$.

\begin{table}[!htbp]
	\begin{center}
		\begin{tabular}{ccccccccccc} \toprule
			$i$ & 5 & 10 & 15 & 20 & 25 & 30 & 35 & 40 & 45 & 50   \\ \midrule
			(I) & 0.82 &  0.84& 0.85& 0.88& 0.93& 0.95& 0.95& 0.97& 0.99& 0.98     \\
			(II) & 0.98 & 0.96& 0.93& 0.87& 0.74& 0.53& 0.66& 0.86& 0.90& 0.98  \\
			(III) & 0.98& 0.94& 0.97& 0.98& 0.96& 0.99& 0.99& 0.97& 0.97& 0.98  \\
			(IV) & 0.98& 0.97& 0.96& 0.91& 0.95& 0.96& 0.99& 0.99& 0.95& 0.89 \\
			(V) & 0.47& 0.77& 0.93& 0.96& 0.95& 0.94& 0.91& 0.90& 0.88& 0.93  \\
			\bottomrule
		\end{tabular}
		\caption{Coverage probabilities of $95\%$ credible intervals by the HH method at different locations}
		\label{tab:covarage}
	\end{center}
\end{table}

\section*{A5 \,\ Convergence diagnosis of Gibbs sampler}

Figures~\ref{fig:sample_path_theta} and \ref{fig:sample_path_others} show the sample paths of the selected model parameters used in the posterior analysis under the half-horseshoe model in Scenario (II) of Section~4. Overall, the sampler is successful in exploring the parameter space efficiently, except that one could view the sampling of $\lambda$ inefficiently. 
The mean of the effective sample sizes of function values $\theta_i$'s is 432.2786, and its standard deviation is 
302.2642. The mean of the effective sample sizes of all parameters is 939.3129, and its standard deviation is 
660.5398. We have also increased the number of MCMC iterations from 2500 to 20000 to see a possible difference in posterior analysis. As summarized in Table~\ref{table:num_pos_samp}, there is little or no difference in the RMSEs, CPs and ALs. 

\begin{figure}
    \centering
\includegraphics[keepaspectratio, width=5.65in]{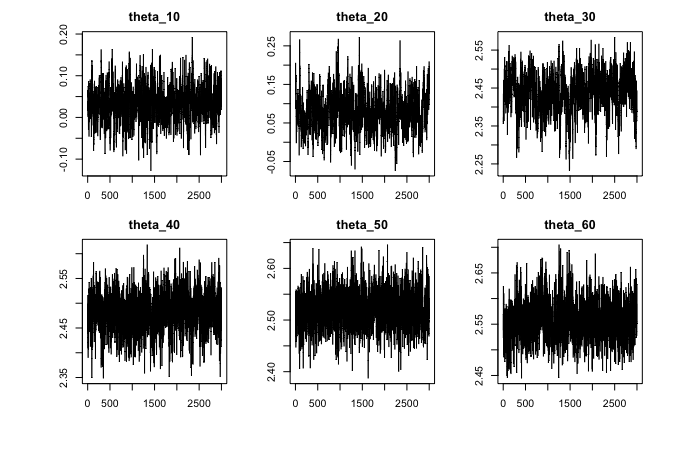}
    \caption{Paths of 2500 MCMC samples (with 500 burn-in) of $\theta _{10}$, $\theta _{20}$, $\theta _{30}$, $\theta _{40}$, $\theta _{50}$ and $\theta _{60}$ under the half-horseshoe priors in Scenario (II) of Section~4.}
    \label{fig:sample_path_theta}
\end{figure}

\begin{figure}
    \centering
\includegraphics[keepaspectratio, width=5.65in]{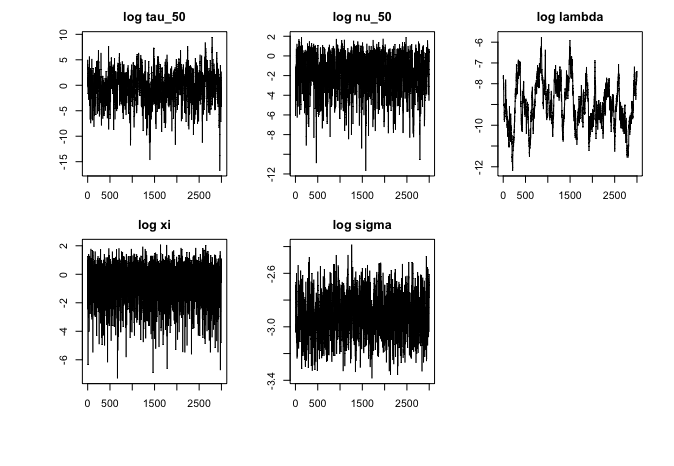}
    \caption{Sample paths of $\tau_{50}^2$, $\nu_{50}^2$, $\lambda_{50}^2$, $\xi^2$ and $\sigma^2$ in the log-scale (half-horseshoe, Scenario (II)).}
    \label{fig:sample_path_others}
\end{figure}

\begin{table}[!htbp]
	\centering
	\begin{tabular}{ccc} \toprule
		 MCMC iterations & 2500   & 20000  \\ \midrule
		 RMSE  &0.089   &0.089   \\
		CP    &90.8   &89.8    \\
		AL    &0.210   & 0.211    \\ \bottomrule
	\end{tabular}
	\caption{The averaged values of the RMSEs, CPs and ALs with different numbers of posterior samples.}
	\label{table:num_pos_samp}
\end{table}

\section*{A6 \,\ Predictive analysis}

Using the Nile River data in Section~5, we illustrate the predictive analysis by the isotonic regression models with the half shrinkage priors. Unlike the plug-in approach in Section~4, we take the formal, fully-Bayesian approach by computing the one-step ahead predictive distribution. In doing so, we repeatedly employ the MCMC method using $y_{1:n}$ at every $n\ge 20$ to forecast $y_{n+1}$. The predictive distribution, $p(y_{n+1}|y_{1:n})$, can be computed easily by adding the following step to the MCMC algorithm in Section~A1. 
\begin{itemize}
    \item[-] {\bf (Sampling of $y_{n+1}$)} \ \  Conditional on the sampled values of $\sigma^2$, $\lambda^2$ and $\theta_n$, 
    \begin{enumerate}
        \item Generate $\tau _{n+1}$ from the Cauchy distribution (the half-horseshoe model) or generate $\tau_{n+1}^2$ from the exponential distribution (the half-Laplace model), or set $\tau_{n+1}=1$ (the half-normal model). 
        \item Generate $\eta _{n+1}$ from $N_+(0,\sigma ^2\lambda ^2 \tau_{n+1}^2)$.
        \item Set $\theta _{n+1} = \eta _{n+1} + \theta_n$.
        \item Generate $y_{t+1}$ from $N(\theta_{n+1},\sigma^2)$.
    \end{enumerate}
\end{itemize}
The predictive analysis can be done by using the generated samples of $y_{n+1}$. We computed and plotted the predictive posterior means and 95\% credible intervals using the half-horseshoe, half-Laplace and half-normal priors in Figure~\ref{nile_pred}. It is seen in this figure that the point forecasts of the half-horseshoe model exhibit several temporal jumps. In addition, the predictive intervals of the half-horseshoe model are narrower than those of the other models. Table~\ref{table:forecast} summarizes the predictive results by the RMSEs of the posterior means and the empirical coverage rates and average lengths of the 95\% predictive intervals. While the half-horseshoe model has the best RMSE, its coverage rate is below the nominal level and could underestimate the predictive uncertainty. 

\begin{figure}[htbp]
	\centering
	\includegraphics[keepaspectratio, width=5.65in]{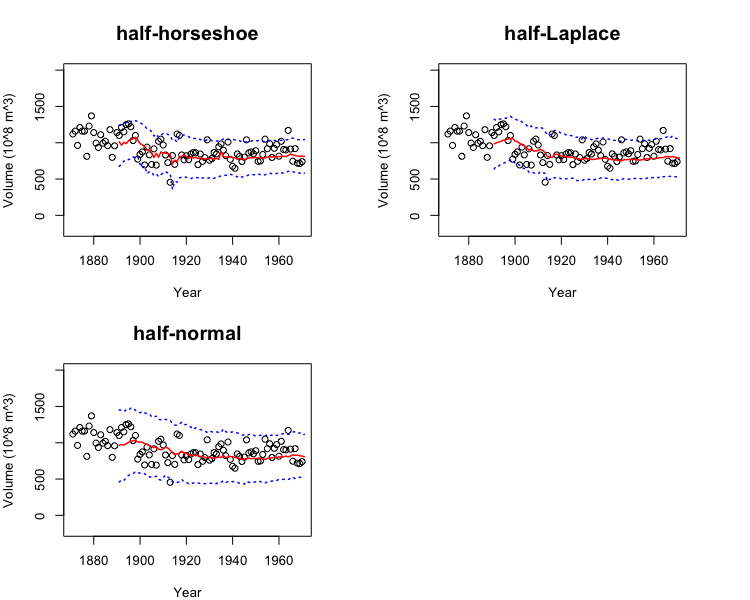}
	\caption{Predictive posterior means (solid, red) and $95\%$ credible intervals (dashed, blue) of $y_{n+1}$. }
	\label{nile_pred}
\end{figure}

\begin{table}[!htbp]
	\centering
	\begin{tabular}{ccccc} \toprule
		        &   & HH   & HL & HN  \\ \midrule
		 & RMSE  &146.212  & 154.253&  152.502 \\
		& CP    &91.2   &93.7& 97.5  \\
		 & AL    &510.380   &575.328  &735.874   \\ \bottomrule
	\end{tabular}
	\caption{The averaged values of the RMSEs, CPs and ALs of the one-step ahead predictive analysis.}
	\label{table:forecast}
\end{table}

\section*{A7 \,\ Choice of half-shrinkage priors}

As seen in the numerical examples of our research, there is a significant difference between the half-normal/Laplace/horseshoe priors in their posterior and predictive analyses. Choosing an appropriate half-shrinkage prior, depending on the context, is an important question for practitioners. To complement such an important decision, one can compare the three half-shrinkage priors by evaluating model-selection measures that can be computed by using the MCMC samples. To exemplify this approach to the choice of priors, we compute the posterior predictive losses (PPL; \citealt{gelfand1998model}) in each scenario of Section~4 and the Nile River data analysis in Section~5, which are summarized in Table~\ref{table:ppl}. For the simulation data of Scenario (II) and Nile river data, where jumps are observed and the half-horseshoe model is expected to fit well, the PPL of the half-horseshoe model is the smallest. By contrast, in the example of piecewise linear functions of Scenario (III), the half-normal and half-Laplace models show smaller PPLs. These observations support the conclusions we made in Section~4 and 5.

\begin{table}[!htbp]
	\centering
	\begin{tabular}{ccccccc} \toprule
            & (I) & (II) & (III) & (IV) & (V) & Nile ($\times 10^6$) \\ \midrule
	    HH  &15.70 & 18.39 &12.39 &14.59 &17.19 & 9.63 \\
        HL  &29.40 & 45.90 &11.94 &16.58 &28.62 & 13.75 \\
	    HN  &36.96 & 67.57 &11.88 &17.82 &32.78 & 18.01 \\ \bottomrule
	\end{tabular}
	\caption{The posterior predictive losses (PPL) of the half-horseshoe (HH), half-Laplace (HL) and half-normal (HN) models computed by using the five simulation data in Section~4 and the Nile River data in Section~5.}
	\label{table:ppl}
\end{table}

\vspace{1cm}
\bibliographystyle{chicago}
\bibliography{ref}

\begin{thebibliography}{}

\bibitem[\protect\citeauthoryear{Barlow, Batholomew, Bremmer, and Brunk}{Barlow
  et~al.}{1972}]{barlow1972statistical}
Barlow, R.~E., D.~J. Batholomew, J.~M. Bremmer, and H.~D. Brunk (1972).
\newblock {\em Statistical Inference Under Order Restrictions. The Theory and
  Application of Isotonic Regression}.
\newblock John Wiley \& Sons, London-New York-Sydney.

\bibitem[\protect\citeauthoryear{Bitto and Fr{\"u}hwirth-Schnatter}{Bitto and
  Fr{\"u}hwirth-Schnatter}{2019}]{bitto2019achieving}
Bitto, A. and S.~Fr{\"u}hwirth-Schnatter (2019).
\newblock Achieving shrinkage in a time-varying parameter model framework.
\newblock {\em Journal of Econometrics\/}~{\em 210\/}(1), 75--97.

\bibitem[\protect\citeauthoryear{Brunk, Barlow, Bartholomew, and Bremner}{Brunk
  et~al.}{1972}]{brunk1972statistical}
Brunk, H., R.~E. Barlow, D.~J. Bartholomew, and J.~M. Bremner (1972).
\newblock Statistical inference under order restrictions.(the theory and
  application of isotonic regression).
\newblock Technical report, Missouri Univ Columbia Dept of Statistics.

\bibitem[\protect\citeauthoryear{Cai and Dunson}{Cai and
  Dunson}{2007}]{cai2007bayesian}
Cai, B. and D.~B. Dunson (2007).
\newblock Bayesian multivariate isotonic regression splines: Applications to
  carcinogenicity studies.
\newblock {\em Journal of the American Statistical Association\/}~{\em
  102\/}(480), 1158--1171.

\bibitem[\protect\citeauthoryear{Carvalho, Polson, and Scott}{Carvalho
  et~al.}{2009}]{carvalho2009handling}
Carvalho, C.~M., N.~G. Polson, and J.~G. Scott (2009).
\newblock Handling sparsity via the horseshoe.
\newblock In {\em Artificial Intelligence and Statistics}, pp.\  73--80. PMLR.

\bibitem[\protect\citeauthoryear{Carvalho, Polson, and Scott}{Carvalho
  et~al.}{2010}]{carvalho2010horseshoe}
Carvalho, C.~M., N.~G. Polson, and J.~G. Scott (2010).
\newblock The horseshoe estimator for sparse signals.
\newblock {\em Biometrika\/}~{\em 97\/}(2), 465--480.

\bibitem[\protect\citeauthoryear{Clarke and Barron}{Clarke and
  Barron}{1990}]{clarke1990information}
Clarke, B.~S. and A.~R. Barron (1990).
\newblock Information-theoretic asymptotics of {B}ayes methods.
\newblock {\em IEEE Transactions on Information Theory\/}~{\em 36\/}(3),
  453--471.

\bibitem[\protect\citeauthoryear{Durbin and Koopman}{Durbin and
  Koopman}{2012}]{durbin2012time}
Durbin, J. and S.~J. Koopman (2012).
\newblock {\em Time series analysis by state space methods}, Volume~38.
\newblock OUP Oxford.

\bibitem[\protect\citeauthoryear{Faulkner and Minin}{Faulkner and
  Minin}{2018}]{faulkner2018locally}
Faulkner, J.~R. and V.~N. Minin (2018).
\newblock Locally adaptive smoothing with {M}arkov random fields and shrinkage
  priors.
\newblock {\em Bayesian analysis\/}~{\em 13\/}(1), 225.

\bibitem[\protect\citeauthoryear{Gelfand and Ghosh}{Gelfand and
  Ghosh}{1998}]{gelfand1998model}
Gelfand, A.~E. and S.~K. Ghosh (1998).
\newblock Model choice: a minimum posterior predictive loss approach.
\newblock {\em Biometrika\/}~{\em 85\/}(1), 1--11.

\bibitem[\protect\citeauthoryear{Groeneboom and Jongbloed}{Groeneboom and
  Jongbloed}{2014}]{groeneboom2014nonparametric}
Groeneboom, P. and G.~Jongbloed (2014).
\newblock {\em Nonparametric estimation under shape constraints}, Volume~38.
\newblock Cambridge University Press.

\bibitem[\protect\citeauthoryear{Hall and Huang}{Hall and
  Huang}{2001}]{hall2001nonparametric}
Hall, P. and L.-S. Huang (2001).
\newblock Nonparametric kernel regression subject to monotonicity constraints.
\newblock {\em The Annals of Statistics\/}~{\em 29\/}(3), 624--647.

\bibitem[\protect\citeauthoryear{Hamura, Irie, and Sugasawa}{Hamura
  et~al.}{2022}]{hamura2022global}
Hamura, Y., K.~Irie, and S.~Sugasawa (2022).
\newblock On global-local shrinkage priors for count data.
\newblock {\em Bayesian Analysis\/}~{\em 17\/}(2), 545--564.

\bibitem[\protect\citeauthoryear{Hastie and Tibshirani}{Hastie and
  Tibshirani}{2000}]{hastie2000bayesian}
Hastie, T. and R.~Tibshirani (2000).
\newblock Bayesian backfitting (with comments and a rejoinder by the authors.
\newblock {\em Statistical Science\/}~{\em 15\/}(3), 196--223.

\bibitem[\protect\citeauthoryear{Kobayashi, Roh, Lee, and Choi}{Kobayashi
  et~al.}{2021}]{kobayashi2021flexible}
Kobayashi, G., T.~Roh, J.~Lee, and T.~Choi (2021).
\newblock Flexible {B}ayesian quantile curve fitting with shape restrictions
  under the {D}irichlet process mixture of the generalized asymmetric {L}aplace
  distribution.
\newblock {\em Canadian Journal of Statistics\/}~{\em 49\/}(3), 698--730.

\bibitem[\protect\citeauthoryear{Kowal, Matteson, and Ruppert}{Kowal
  et~al.}{2019}]{kowal2019dynamic}
Kowal, D.~R., D.~S. Matteson, and D.~Ruppert (2019).
\newblock Dynamic shrinkage processes.
\newblock {\em Journal of the Royal Statistical Society: Series B (Statistical
  Methodology)\/}~{\em 81\/}(4), 781--804.

\bibitem[\protect\citeauthoryear{Lenk and Choi}{Lenk and
  Choi}{2017}]{lenk2017bayesian}
Lenk, P.~J. and T.~Choi (2017).
\newblock Bayesian analysis of shape-restricted functions using {G}aussian
  process priors.
\newblock {\em Statistica Sinica\/}~{\em 27}, 43--69.

\bibitem[\protect\citeauthoryear{Lin and Dunson}{Lin and
  Dunson}{2014}]{lin2014bayesian}
Lin, L. and D.~B. Dunson (2014).
\newblock Bayesian monotone regression using {G}aussian process projection.
\newblock {\em Biometrika\/}~{\em 101\/}(2), 303--317.

\bibitem[\protect\citeauthoryear{Mammen}{Mammen}{1991}]{mammen1991estimating}
Mammen, E. (1991).
\newblock Estimating a smooth monotone regression function.
\newblock {\em The Annals of Statistics\/}~{\em 19\/}(2), 724--740.

\bibitem[\protect\citeauthoryear{Matsuda and Miyatake}{Matsuda and
  Miyatake}{2021}]{matsuda2021generalized}
Matsuda, T. and Y.~Miyatake (2021).
\newblock Generalized nearly isotonic regression.
\newblock {\em arXiv preprint arXiv:2108.13010\/}.

\bibitem[\protect\citeauthoryear{McKay~Curtis and Ghosh}{McKay~Curtis and
  Ghosh}{2011}]{mckay2011variable}
McKay~Curtis, S. and S.~K. Ghosh (2011).
\newblock A variable selection approach to monotonic regression with
  {B}ernstein polynomials.
\newblock {\em Journal of Applied Statistics\/}~{\em 38\/}(5), 961--976.

\bibitem[\protect\citeauthoryear{Morton-Jones, Diggle, Parker, Dickinson, and
  Binks}{Morton-Jones et~al.}{2000}]{morton2000additive}
Morton-Jones, T., P.~Diggle, L.~Parker, H.~O. Dickinson, and K.~Binks (2000).
\newblock Additive isotonic regression models in epidemiology.
\newblock {\em Statistics in medicine\/}~{\em 19\/}(6), 849--859.

\bibitem[\protect\citeauthoryear{Neelon and Dunson}{Neelon and
  Dunson}{2004}]{neelon2004bayesian}
Neelon, B. and D.~B. Dunson (2004).
\newblock Bayesian isotonic regression and trend analysis.
\newblock {\em Biometrics\/}~{\em 60\/}(2), 398--406.

\bibitem[\protect\citeauthoryear{Park and Casella}{Park and
  Casella}{2008}]{park2008bayesian}
Park, T. and G.~Casella (2008).
\newblock The {B}ayesian lasso.
\newblock {\em Journal of the American Statistical Association\/}~{\em
  103\/}(482), 681--686.

\bibitem[\protect\citeauthoryear{Quah}{Quah}{2000}]{quah2000monotonicity}
Quah, J. K.-H. (2000).
\newblock The monotonicity of individual and market demand.
\newblock {\em Econometrica\/}~{\em 68\/}(4), 911--930.

\bibitem[\protect\citeauthoryear{{R Core Team}}{{R Core Team}}{2018}]{datasets}
{R Core Team} (2018).
\newblock {\em R: A Language and Environment for Statistical Computing}.
\newblock Vienna, Austria: R Foundation for Statistical Computing.

\bibitem[\protect\citeauthoryear{Ramsay}{Ramsay}{1998}]{ramsay1998estimating}
Ramsay, J.~O. (1998).
\newblock Estimating smooth monotone functions.
\newblock {\em Journal of the Royal Statistical Society: Series B (Statistical
  Methodology)\/}~{\em 60\/}(2), 365--375.

\bibitem[\protect\citeauthoryear{Robertson, Wright, and Dykstra}{Robertson
  et~al.}{1988}]{robertson1988order}
Robertson, T., F.~T. Wright, and R.~L. Dykstra (1988).
\newblock {\em Order Restricted Statistical Inference}.
\newblock Wiley Series in Probability and Mathematical Statistics. John Wiley
  \& Sons, Ltd., Chichester.

\bibitem[\protect\citeauthoryear{Shively, Sager, and Walker}{Shively
  et~al.}{2009}]{shively2009bayesian}
Shively, T.~S., T.~W. Sager, and S.~G. Walker (2009).
\newblock A {B}ayesian approach to non-parametric monotone function estimation.
\newblock {\em Journal of the Royal Statistical Society: Series B (Statistical
  Methodology)\/}~{\em 71\/}(1), 159--175.

\bibitem[\protect\citeauthoryear{Tibshirani, Hoefling, and
  Tibshirani}{Tibshirani et~al.}{2011}]{tibshirani2011nearly}
Tibshirani, R.~J., H.~Hoefling, and R.~Tibshirani (2011).
\newblock Nearly-isotonic regression.
\newblock {\em Technometrics\/}~{\em 53\/}(1), 54--61.

\bibitem[\protect\citeauthoryear{Wang and Li}{Wang and
  Li}{2008}]{wang2008isotonic}
Wang, X. and F.~Li (2008).
\newblock Isotonic smoothing spline regression.
\newblock {\em Journal of Computational and Graphical Statistics\/}~{\em
  17\/}(1), 21--37.

\bibitem[\protect\citeauthoryear{Williams and Rasmussen}{Williams and
  Rasmussen}{2006}]{williams2006gaussian}
Williams, C.~K. and C.~E. Rasmussen (2006).
\newblock {\em Gaussian processes for machine learning}, Volume~2.
\newblock MIT press Cambridge, MA.

\end{thebibliography}

\end{document}